\documentclass[
    aps,
    pra,
    twocolumn,
    nofootinbib,
    superscriptaddress
]{revtex4-2}

\usepackage{amsmath}
\usepackage{amssymb}
\usepackage{amsthm}
\usepackage{braket}
\usepackage{bm}
\usepackage{mathtools}
\usepackage{microtype}
\usepackage[dvipsnames]{xcolor}

\usepackage[hidelinks]{hyperref}
\usepackage{tikz}
\usepackage[T1]{fontenc}
 
\newtheorem{lemma}{Lemma}

\newtheorem{theorem}{Theorem}

\begin{document}

\title{Two-copy distillability of one-copy-undistillable negative-partial-transpose states in every dimension}

\author{Gelo Noel M. Tabia}
\email{gelo.tabia@foxconn.com}
\affiliation{Hon Hai (Foxconn) Research Institute, Taipei 114, Taiwan}

\author{Kai-Siang Chen}
\affiliation{Department of Physics and Center for Quantum Frontiers of Research \& Technology (QFort), National Cheng Kung University, Tainan 701, Taiwan}

\author{Min-Hsiu Hsieh}
%\email{min-hsiu.hsieh@foxconn.com}
\affiliation{Hon Hai (Foxconn) Research Institute, Taipei 114, Taiwan}

\date{\today}
\begin{abstract}
Whether negative-partial-transpose (NPT) states that are
undistillable from one copy become distillable from finitely many
copies remains a basic open problem in entanglement theory. We study the
canonical two-parameter family of DiVincenzo \textit{et al.}, introduced
as a symmetry-reduced testbed for this question. 
We prove that a distinguished one-copy-undistillable state
in this family is already two-copy distillable in every local
dimension $d\geq3$. A uniform equal-norm tight-frame construction
gives explicit Schmidt-rank-two certificates in every dimension,
thereby disproving the conjecture that the entire
one-copy-undistillable region of the canonical family remains
undistillable for arbitrarily many copies. The same witnesses certify
an open two-copy-distillable neighborhood around the counterexample,
while separately constructed three-copy witnesses enlarge the inner bounds
for the distillable region in the surrounding parameter space. 
In contrast, recent results for Werner states, together with the propagation 
argument of DiVincenzo \textit{et al.}, establish a neighboring region of
one-copy-undistillable states that remains two-copy undistillable.
Thus a single symmetry-reduced family contains rigorously certified
states with opposite two-copy behavior, separated by a substantial
region whose finite-copy distillability remains unresolved.
\end{abstract}

\maketitle

\section{Introduction}

A bipartite state with negative partial transpose (NPT) is
necessarily entangled~\cite{Peres1996}, but local operations on a
single copy need not produce an entangled two-qubit output. Whether every NPT state becomes distillable after finitely many
copies remains one of the longest-standing questions in entanglement
theory~\cite{Horodecki1998,Dur2000,Clarisse2006,
HorodeckiReview2009,HorodeckiOpenProblems}. 
The difficulty is not in detecting negativity of the partial transpose, but 
rather in determining whether it can be accessed by
a vector of Schmidt rank at most two across the bipartition 
between the collective Alice--Bob subsystems. 
The number of copies can matter substantially: states
requiring arbitrarily many copies before distillation becomes possible
are known~\cite{Watrous2004}. What remained unknown, however, was
whether any separation between one-copy and finite-copy distillability
occurs within the canonical family of DiVincenzo \textit{et al.}~\cite{DiVincenzo2000}, a
symmetry-reduced setting introduced specifically as a test bed for the
NPT-distillability problem.

A useful way to isolate this obstruction is to impose symmetry without
removing the NPT character.  DiVincenzo \textit{et al.} introduced a canonical twirling procedure that reduces the state to a family specified by three coefficients: $a$ is the weight of the perfectly correlated basis states, while $b$ and $c$ are the weights, respectively, of the antisymmetric and symmetric superpositions associated with each distinct pair of basis labels.  Normalization leaves two independent parameters.  Every NPT
state can be mapped by stochastic local operations to an NPT state of
this form, making the resulting plane a nontrivial test bed for the
general problem~\cite{DiVincenzo2000}.  
Figure~\ref{fig:qutrit-geometry}
introduces its geometry for $d=3$ before we give the formal definition in
Sec.~II.

\begin{figure}[!t]
\centering
\begin{tikzpicture}[
    x=1.55cm,
    y=1.02cm,
    every node/.style={font=\small},
    point/.style={circle,fill=black,inner sep=1.5pt},
    bluepoint/.style={
        circle,
        fill=blue!70!black,
        inner sep=1.5pt
    }
]
    % Drawing coordinates:
    % X = 30(b-1/6), Y = 24c.

    \pgfmathsetmacro{\xBp}{5-2*sqrt(5)}
    \pgfmathsetmacro{\xGp}{15/19}
    \pgfmathsetmacro{\yGp}{8/19}

    % Exact canonical coordinates for d=3.
    \coordinate (B)  at (0,0);
    \coordinate (C)  at ({5/7},0);
    \coordinate (F)  at (5,0);
    \coordinate (K)  at (0,4);
    \coordinate (G)  at (1,{8/5});
    \coordinate (H)  at (0,2);
    \coordinate (Bp) at (\xBp,0);
    \coordinate (Gp) at (\xGp,\yGp);

    % Nonoverlapping parts of BCGK.
    \fill[gray!12]
        (B)--(C)--(G)--cycle;

    \fill[gray!28]
        (B)--(G)--(K)--cycle;

    % Exact fixed-witness region inside CBG.
    % The path follows B' -> C -> G', then the exact conic
    % from G' back to B'.
    \path[fill=blue!12]
        (Bp)--(C)--(Gp)
        plot[
            domain=\xGp:\xBp,
            samples=160,
            smooth,
            variable=\u
        ]
        ({
            \u
        },{
            (8*sqrt(\u*(11*\u+30))-28*\u-20)/5
        })
        --cycle;

    % Canonical boundaries.
    \draw[thick]
        (B)--(F)--(K)--cycle;

    \draw[thick]
        (B)--(C)--(G)--(K)--cycle;

    \draw[dash dot]
        (B)--(G);

    % Werner line and one-copy-undistillable segment.
    \draw[dashed,thick]
        (F)--(H);

    \draw[very thick]
        (G)--(H);

    % Exact witness boundary.
    \draw[blue!70!black,thick]
        plot[
            domain=\xBp:\xGp,
            samples=160,
            smooth,
            variable=\u
        ]
        ({
            \u
        },{
            (8*sqrt(\u*(11*\u+30))-28*\u-20)/5
        });

    % Canonical points.
    \node[
        point,
        label=below left:$B$
    ] at (B) {};

    \node[
        point,
        label=below:$C$
    ] at (C) {};

    \node[
        point,
        label=below:$F$
    ] at (F) {};

    \node[
        point,
        label=left:$K$
    ] at (K) {};

    % \node[
    %     point,
    %     label=right:$G$
    % ] at (G) {};

    \node[
        point,
        label={[yshift=3pt]right:$G$}
    ] at (G) {};

    \node[
        point,
        label=left:$H$
    ] at (H) {};

    % Exact witness-boundary intersections.
    \node[
        bluepoint,
        label={[blue!70!black]below left:$B'$}
    ] at (Bp) {};

    \node[
        bluepoint,
        label={[blue!70!black]right:$G'$}
    ] at (Gp) {};

    % Region labels.
    %\node at (0.25,2.9) {$BGK$};
    %\node at (0.43,0.90) {$CBG$};

    \node[rotate=-15]
        at (2.5,1.2)
        {Werner line $FH$};

    \node[
        align=left,
        anchor=west,
        text=blue!70!black,
        font=\footnotesize
    ] at (1.18,0.38)
        {fixed $C$-witness\\two-copy distillable};
\end{tikzpicture}
\caption{Parameter plane for canonical NPT states in $d=3$. The quadrilateral region $BCGK$ is one-copy undistillable while the dark gray triangle $BGK$ is also two-copy undistillable. Exact coordinates and boundary intersections are given in Sections~\ref{sec:prelim} and \ref{sec:results}. The blue lobe region is certified to be two-copy distillable by the fixed witness described in Sec.~\ref{sec:results}. 
}
\label{fig:qutrit-geometry}
\end{figure}

DiVincenzo \textit{et al.} conjectured that the entire region $BCGK$
might remain undistillable for any finite number of copies, based on
analytic propagation arguments, fixed-copy neighborhoods near the
PPT boundary $BK$, and extensive numerical
searches~\cite{DiVincenzo2000}. The two-copy problem on the Werner
line was subsequently studied using numerical, Schmidt-rank, and
matrix-inequality methods~\cite{ViannaDoherty2006,Pankowski2010,
Djokovic2016,Qian2021,Qi2024,CostaRico2025,CostaRicoWolf2025}.
Recent independent works have now established that Werner states are
two-copy distillable exactly when they are one-copy
distillable~\cite{WernerTwoCopyFu,SongChen2026,
WernerTwoCopyFraser,WernerTwoCopyBharti}. Consequently, the entire
segment $GH$, including point $G$, is two-copy undistillable. The
propagation lemma of DiVincenzo \textit{et al.} then implies
two-copy undistillability throughout the triangle $BGK$.
The three-copy status of $G$ nevertheless remains open. Recent work
has shown three-copy undistillability for several broad classes of
Schmidt-rank-two test vectors, but the fully general case remains
unresolved~\cite{WuZou2026}.

Our main result shows that the other side of the line $BG$ behaves
differently.  Point $C$ is two-copy distillable in every dimension
$d\geq3$, despite being one-copy undistillable.  
This disproves the conjecture for $BCGK$.  
Because the canonical reduction is many-to-one,
this conclusion is not restricted to the symmetric representative
itself: it also applies to the broader class of NPT states that can be
mapped by the same stochastic local preprocessing to $\rho_{C,d}$.
Moreover, the certificate is not confined to point $C$: for every $d\geq3$, 
the fixed two-copy witness remains negative throughout an open region 
containing $C$, while separately constructed three-copy witnesses 
extend the certified region in some directions.
These regions are rigorous inner bounds whose boundaries depend on the
chosen witnesses and need not coincide with the true distillability
boundaries. The remaining part of $BCG$ therefore provides a compact
setting for studying finite-copy distillability beyond Werner symmetry.

Although the three vertices have relatively simple descriptions, their 
convex hull is not determined by the behavior at the vertices. 
Finite-copy distillability is governed by $(\rho^\Gamma)^{\otimes k}$,
where $\Gamma$ is the partial transpose with respect to Bob's subsystem.
This quantity depends nonlinearly on $\rho$, and convex mixtures generate mixed 
tensor products that are absent at the endpoints. Consequently, the 
two-copy-distillable certificate at $C$ and the two-copy-undistillability 
results at $B$ and $G$ do not interpolate across $BCG$.

The paper is organized as follows.  We first state the finite-copy
criterion and review the canonical geometry, including why the original
propagation mechanism controls $BGK$ but not $C$.  We then present the 
uniform two-copy certificate, the region it defines, and the analytic 
three-copy inner bounds.  We close with the implications for the unresolved 
part of $BCG$ and with the main obstacles to locating the true boundary.

\section{Preliminaries}
\label{sec:prelim}

\subsection{Finite-copy distillability}

A bipartite state $\rho$ is $k$-copy distillable if and only if
there exists a vector $\ket{\psi}$ of Schmidt rank at most two
across $A_1\cdots A_k:B_1\cdots B_k$ such that
\begin{equation}
    \bra{\psi}(\rho^\Gamma)^{\otimes k}\ket{\psi}<0.
    \label{eq:k-copy-criterion}
\end{equation}
The Schmidt-rank restriction is the operational content of the
criterion: it asks whether local two-dimensional output spaces can
isolate an NPT two-qubit state from the $k$ copies. A negative vector
of the filtered two-qubit partial transpose pulls back to a vector
satisfying Eq.~\eqref{eq:k-copy-criterion}, while conversely the
Schmidt decomposition of $\ket{\psi}$ defines suitable local
rank-two maps~\cite{HorodeckiDistillation,Horodecki1998,Dur2000}.
For one copy, violation of the reduction criterion provides an
important sufficient condition for distillability~\cite{Horodecki1999_redcrit}.

\subsection{Canonical geometry and the Werner benchmark}

The canonical family is defined by
\begin{align}
    \rho_{b,c}
    ={}&
    a\sum_{i=0}^{d-1}\ket{ii}\!\bra{ii}
    +b\sum_{0\leq i<j\leq d-1}
       \ket{\psi^-_{ij}}\!\bra{\psi^-_{ij}}
    \nonumber\\
    &+
    c\sum_{0\leq i<j\leq d-1}
       \ket{\psi^+_{ij}}\!\bra{\psi^+_{ij}},
    \label{eq:canonical-family}
\end{align}
where $\ket{\psi^\pm_{ij}}=(\ket{ij}\pm\ket{ji})/\sqrt2$ and
\begin{equation}
    da+\frac{d(d-1)}2(b+c)=1.
    \label{eq:canonical-normalization}
\end{equation}
It is physical in the triangle selected by $a,b,c\geq0$.  The vertices
relevant to the one-copy-undistillable region are
\begin{align}
    G&=\left(\frac3{d(2d-1)},\frac1{d(2d-1)}\right),
    &
    F&=\left(\frac2{d(d-1)},0\right),
    \nonumber\\
    H&=\left(\frac1{d(d-1)},\frac1{d(d+1)}\right),
    &
    B&=\left(\frac1{d(d-1)},0\right),
    \nonumber\\
    K&=\left(\frac1{d(d-1)},\frac1{d(d-1)}\right),
    &
    C&=\left(\frac4{d(3d-2)},0\right).
    \label{eq:canonical-points}
\end{align}

The quadrilateral $BCGK$ is one-copy undistillable throughout
\cite{DiVincenzo2000}.  The Werner family \cite{Werner1989}
\begin{equation}
    \rho_{\alpha}^{\mathrm W}
    =
    \frac{I+\alpha F}{d^2+\alpha d},
    \qquad -1\leq\alpha\leq1,
    \label{eq:werner}
\end{equation}
forms the line $FH$.  Point $G$ corresponds to $\alpha=-1/2$, the
one-copy-distillability threshold, and $H$ to $\alpha=-1/d$, the PPT
boundary~\cite{Werner1989,Dur2000}.  Hence $GH$ is the one-copy-undistillable Werner segment.
Recent results show that the one- and two-copy thresholds coincide for
Werner states~\cite{WernerTwoCopyFu,SongChen2026,WernerTwoCopyFraser,WernerTwoCopyBharti},
so $G$ and the entire segment $GH$ are two-copy undistillable.

DiVincenzo \textit{et al.} also proved a propagation lemma: if $G$ is
$n$-copy undistillable, then every state in $BGK$ is $n$-copy
undistillable~\cite{DiVincenzo2000}.  The added $B$ and $K$ components
are separable, so after partial transposition they decompose into product
projectors.  Projecting some copies of a global Schmidt-rank-two vector onto these product states cannot increase the Schmidt rank on the remaining copies.  Taking $n=2$ and using the recent Werner result therefore establishes that all of $BGK$ is two-copy undistillable.

\subsection{Why point \texorpdfstring{$C$}{C} is the critical test case}

At point $C$,
\begin{equation}
    (a,b,c)
    =
    \left(
        \frac{1}{3d-2},
        \frac{4}{d(3d-2)},
        0
    \right),
    \label{eq:point-C}
\end{equation}
and we denote the corresponding state by $\rho_{C,d}$.  Introducing
\begin{equation}
    \Delta=\sum_i\ket{ii}\!\bra{ii},
    \qquad
    F=\sum_{i,j}\ket{ij}\!\bra{ji},
    \label{eq:Delta-F}
\end{equation}
we have
\begin{align}
    \rho_{C,d}
    &=
    \frac{1}{3d-2}
    \left[\Delta+\frac{2}{d}(I-F)\right],
    \label{eq:rhoCd}\\
    \rho_{C,d}^{\Gamma}
    &=
    \frac{1}{3d-2}Q_d,
    \qquad
    Q_d=\Delta+\frac{2}{d}(I-\Omega),
    \label{eq:Qd}
\end{align}
where
\begin{equation}
    \Omega=\sum_{i,j}\ket{ii}\!\bra{jj}
    =d\ket{\Phi_d}\!\bra{\Phi_d}.
\end{equation}
Thus
\begin{equation}
    \bra{\Phi_d}\rho_{C,d}^{\Gamma}\ket{\Phi_d}
    =-
    \frac{d-2}{d(3d-2)}<0,
    \label{eq:C-NPT}
\end{equation}
so $C$ is NPT in every dimension $d\geq3$.

Nevertheless, $C$ is one-copy undistillable.  The original proof uses
\begin{equation}
    \rho_{C,d}^{\Gamma}
    =
    \frac{2d-1}{3d-2}\rho_{G,d}^{\Gamma}
    +
    \frac{2}{d(3d-2)}\sum_{i<j}\Pi_{ij},
    \label{eq:C-from-G}
\end{equation}
with
$\Pi_{ij} = \frac12(\ket{ii}-\ket{jj})(\bra{ii}-\bra{jj})\geq0$.
The first term is nonnegative on all Schmidt-rank-at-most-two vectors
because $G$ is one-copy undistillable, while the second term is positive
semidefinite.  The same decomposition does not propagate to several
copies: the ranges of the $\Pi_{ij}$ are entangled, and projecting some copies
onto these entangled subspaces can increase the Schmidt rank on the
remaining copies.
This is the precise gap left by the original argument and the reason
point $C$ is the natural place to search for a counterexample.

\section{Results}
\label{sec:results}

\subsection{A uniform equal-norm tight-frame certificate}

\begin{theorem}[Two-copy distillability of canonical point
\texorpdfstring{$C$}{C}]
For every local dimension $d\geq3$, the canonical state $\rho_{C,d}$ is
one-copy undistillable but two-copy distillable. Consequently, point $C$
provides, in every dimension $d\geq3$, a counterexample to the conjecture
that the entire region $BCGK$ is undistillable.
\label{thm:pointC}
\end{theorem}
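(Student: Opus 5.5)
The theorem has two parts, and I would handle them separately. \emph{One-copy undistillability} follows directly from Eq.~\eqref{eq:C-from-G}. For any vector $\ket{\psi}$ of Schmidt rank at most two,
\[
\bra{\psi}\rho_{C,d}^\Gamma\ket{\psi}=\tfrac{2d-1}{3d-2}\bra{\psi}\rho_{G,d}^\Gamma\ket{\psi}+\tfrac{2}{d(3d-2)}\sum_{i<j}\bra{\psi}\Pi_{ij}\ket{\psi}.
\]
The first term is nonnegative because $G$ is one-copy undistillable~\cite{DiVincenzo2000}, and the second because each $\Pi_{ij}\geq0$. So the real content of the theorem is \emph{two-copy distillability}. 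By Eq.~\eqref{eq:k-copy-criterion} and Eq.~\eqref{eq:Qd}, this reduces to exhibiting, for every $d\geq3$, a vector $\ket{\psi}$ of Schmidt rank at most two across $A_1A_2:B_1B_2$ with $\bra{\psi}Q_d\otimes Q_d\ket{\psi}<0$.

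\emph{Expanding the operator.} I would first write
\[
Q_d\otimes Q_d=\Delta\otimes\Delta+\tfrac{2}{d}\bigl[\Delta\otimes(I-\Omega)+(I-\Omega)\otimes\Delta\bigr]+\tfrac{4}{d^2}(I-\Omega)\otimes(I-\Omega).
\]
The negative contributions come from the cross terms $-\Delta\otimes\Omega$, $-\Omega\otimes\Delta$ and $-I\otimes\Omega$, $-\Omega\otimes I$. The term $+\Omega\otimes\Omega$ is a positive penalty. A naive product ansatz (a fixed product state on one copy) collapses to a one-copy test, which the first paragraph shows cannot succeed. The witness must therefore be genuinely entangled between the copy labels, on Alice's side $A_1A_2$ and on Bob's side $B_1B_2$ separately, while keeping global Schmidt rank two.

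\emph{The ansatz.} Let $\{f_0,\dots,f_{d-1}\}\subset\mathbb{C}^2$ be an equal-norm tight frame, for instance $f_j=(1,\omega^j)/\sqrt2$ with $\omega=e^{2\pi i/d}$, so that $\sum_j f_jf_j^\dagger=\tfrac d2 I_2$ and $|f_j|$ is constant. I would take
\[
\ket{\psi}=\sum_{k=0}^{1}\ket{x_k}_{A_1A_2}\ket{y_k}_{B_1B_2},
\]
with $\ket{x_k}=\sum_{j}\overline{(f_j)_k}\,\ket{j}\ket{u_j}$ and $\ket{y_k}$ built analogously from the same frame and a second family $\ket{v_j}$. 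This gives Schmidt rank two by construction. The frame coefficients make $\ket{\psi}$ overlap strongly with $\ket{\Phi_d}$ on each copy (to harvest the $-\Omega$ cross terms) while keeping the doubly-maximally-entangled component, which feeds $+\Omega\otimes\Omega$, small.

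\emph{Computing the expectation.} I would exploit the cyclic $\mathbb{Z}_d$ symmetry of the frame together with the permutation symmetry of $\Delta$, $I$ and $\Omega$. Every term $\bra{\psi}X\otimes Y\ket{\psi}$ becomes a sum of frame overlaps $\langle f_j,f_l\rangle$, which the tight-frame identity evaluates in closed form, e.g. $\sum_{j,l}|\langle f_j,f_l\rangle|^2=d^2/2$. The result is a rational function of $d$ and a few free weights; optimizing these should give a value of order $-c/d^2$ with $c>0$ for all $d\geq3$.

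The \emph{main obstacle} is the second step: choosing the inner families $\ket{u_j},\ket{v_j}$ and weights so that the negative cross terms beat the $\Omega\otimes\Omega$ penalty \emph{uniformly in $d$}, rather than checking each $d$ numerically. I would first try $\ket{u_j}=\ket{v_j}=\ket{j}$ with an adjustable admixture of a product component $\ket{00}$. I would use small cases $d=3,4$ as a sanity check before attempting the general closed form.
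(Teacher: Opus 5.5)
Your one-copy argument is the paper's own (the decomposition \eqref{eq:C-from-G}), so that half is fine. The two-copy half, which is the entire content of the theorem, is not proved. You never fix a Schmidt-rank-two vector or evaluate $\bra{\psi}Q_d^{\otimes2}\ket{\psi}$, and ``optimizing these should give $-c/d^2$'' is a hope rather than a step. Worse, the concrete starting point you propose provably fails. With $\ket{u_j}=\ket{v_j}=\ket{j}$, and with any admixture of $\ket{00}$ into $\ket{x_k}$ or $\ket{y_k}$, your vector has the form $\ket{\psi}=\sum_{j,l}c_{jl}\ket{jl}_{A_1B_1}\ket{jl}_{A_2B_2}$. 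Write $D=\sum_j|c_{jj}|^2$, $N=\sum_{j,l}|c_{jl}|^2$ and $T=|\sum_jc_{jj}|^2$. Then both mixed terms vanish, $\bra{\psi}\Delta\otimes(I-\Omega)\ket{\psi}=\bra{\psi}(I-\Omega)\otimes\Delta\ket{\psi}=D-D=0$, and
\[
\bra{\psi}Q_d^{\otimes2}\ket{\psi}=\Bigl(1-\tfrac{4}{d^2}\Bigr)D+\tfrac{4}{d^2}(N-D)+\tfrac{4}{d^2}T\geq0
\]
for every coefficient matrix $c$, whatever frame you use.

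The obstruction extends to your whole family. There, the copy-2 vector attached to copy-1 labels $(j,l)$ is the product $\ket{u_j}\ket{v_l}$, so every block of the coefficient matrix has rank one. Consequently $\bra{\psi}\Delta\otimes(I-\Omega)\ket{\psi}=\sum_j|c_{jj}|^2\bigl(\|u_j\|^2\|v_j\|^2-|\sum_i(u_j)_i(v_j)_i|^2\bigr)\geq0$ by Cauchy--Schwarz. One of the two negative mixed terms is therefore structurally unavailable, and nothing in the proposal shows the remaining terms can compensate.

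The missing idea is how to remove the $\Omega\otimes\Omega$ penalty exactly while keeping both mixed terms negative. The paper imposes copy-swap oddness $SMS^T=-M$ on the coefficient matrix, which forces $\operatorname{tr}M=0$. It realizes this at Schmidt rank two by pairing a distinguished label $\ket0$ with a frame on the remaining $d-1$ labels, not on all $d$. Concretely, $\ket{u_r}=\ket0\ket{x_r}+\ket{x_r}\ket0$ on Alice's side, $\ket{v_r}=\ket0\ket{x_r}-\ket{x_r}\ket0$ on Bob's, and $\ket{\Psi_d}=\frac1{\sqrt2}\sum_r\ket{u_r}\ket{v_r}$, where $\braket{x_r|x_s}=\frac12\delta_{rs}$ and $\sum_r|\braket{j|x_r}|^2=\frac1{d-1}$.

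With this choice, the copy-2 block attached to copy-1 label $(0,0)$ is $K/\sqrt2$, where $K=\sum_r\ket{x_r}\!\bra{x_r}$ has rank two. That rank is exactly what permits $|\operatorname{tr}M_{00}|^2>\|M_{00}\|_{\mathrm F}^2$. Using $K_{jj}=\frac1{d-1}$, $\operatorname{tr}K=1$ and $\operatorname{tr}K^2=\frac12$, the four terms take the values $\frac1{d-1}$, $-\frac14$, $-\frac14$, $-\frac12$. Hence $\bra{\Psi_d}Q_d^{\otimes2}\ket{\Psi_d}=-\frac{d-2}{d^2(d-1)}<0$ for all $d\geq3$. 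Without an explicit vector of this kind and its evaluation, your argument establishes only the one-copy half of the theorem.
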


The one-copy statement was recalled in Sec.~II. We prove the two-copy
statement by constructing one Schmidt-rank-two vector that works
uniformly for every $d\geq3$.

To motivate the form of the witness, regard a vector on
$(A_1A_2):(B_1B_2)$ as the row vectorization of a
$d^2\times d^2$ coefficient matrix $M$. When $M$ is written as a
$d\times d$ array of $d\times d$ blocks, the terms containing
$\Omega$ probe coherent sums of diagonal blocks and block traces.
In particular, the expectation of
$(I-\Omega)\otimes(I-\Omega)$ contains two favorable negative
coherence terms, together with the positive contribution
$|\operatorname{tr}M|^2$.
This suggests choosing $M$ to be odd under simultaneous interchange
of the two copies,
\begin{equation}
    S M S^T=-M,
    \label{eq:odd-copy-parity}
\end{equation}
where $S\ket{ij}=\ket{ji}$ is the local copy-swap operator. This
immediately forces $\operatorname{tr}M=0$. For a Schmidt-rank-two
matrix, this can be realized by choosing the local Schmidt vectors 
on Alice's side to be symmetric under $S$ and those on Bob's side 
to be antisymmetric under $S$. These are exactly the
combinations $\ket{0x_r}\pm\ket{x_r0}$ used below.
Once this parity structure is fixed, the remaining positive 
diagonal contribution is minimized by distributing the total weight 
of the two Schmidt modes uniformly over the nonzero labels, 
leading naturally to an equal-norm tight frame.

Let $n=d-1$. Choose two vectors $\ket{x_1},\ket{x_2}$ in
$\operatorname{span}\{\ket1,\ldots,\ket{d-1}\}$ satisfying
\begin{equation}
    \braket{x_r|x_s}=\frac12\delta_{rs}.
    \label{eq:x-orthogonality}
\end{equation}
On each local two-copy space, define
\begin{align}
    \ket{u_r}
    &=
    \ket0\ket{x_r}+\ket{x_r}\ket0,
    \\
    \ket{v_r}
    &=
    \ket0\ket{x_r}-\ket{x_r}\ket0,
    \qquad r=1,2.
\end{align}
Thus $\ket{u_r}$ and $\ket{v_r}$ are respectively symmetric and
antisymmetric under interchange of the two local copies. 
Pairing these opposite copy parities produces the negative
contributions from the coherent terms needed below.
Equation~\eqref{eq:x-orthogonality} also makes the two sets
$\{\ket{u_1},\ket{u_2}\}$ and $\{\ket{v_1},\ket{v_2}\}$ orthonormal, so
\begin{equation}
    \ket{\Psi_d}
    =
    \frac1{\sqrt2}
    \left(
        \ket{u_1}_{A_1A_2}\ket{v_1}_{B_1B_2}
        +
        \ket{u_2}_{A_1A_2}\ket{v_2}_{B_1B_2}
    \right)
    \label{eq:general-witness}
\end{equation}
is normalized and has Schmidt rank exactly two across
$(A_1A_2):(B_1B_2)$.

The remaining freedom is how the two Schmidt modes are distributed over
the $n$ nonzero basis labels. For each $j=1,\ldots,n$, collect their
coefficients into
\begin{equation}
    \bm f_j
    =
    \begin{pmatrix}
        \braket{j|x_1}\\
        \braket{j|x_2}
    \end{pmatrix}.
\end{equation}
Equation~\eqref{eq:x-orthogonality} is equivalently the tight-frame
identity
\begin{equation}
    \sum_{j=1}^{n}\bm f_j\bm f_j^\dagger
    =
    \frac12 I_2.
    \label{eq:abstract-tight-frame}
\end{equation}
If $p_j=\|\bm f_j\|^2$ denotes the total weight placed on label $j$,
then $\sum_jp_j=1$. For the ansatz in Eq.~\eqref{eq:general-witness},
the positive diagonal contribution is
\begin{equation}
    \bra{\Psi_d}\Delta\otimes\Delta\ket{\Psi_d}
    =
    \sum_{j=1}^{n}p_j^2
    \geq
    \frac1n,
    \label{eq:diagonal-frame-bound}
\end{equation}
where equality holds exactly when $p_j=1/n$ for every $j$. Thus an
equal-norm tight frame distributes the unavoidable positive
$\Delta\otimes\Delta$ contribution as uniformly, and hence as weakly, as
possible.

An equally spaced real frame attains this bound for every $n\geq2$.
Explicitly, define
\begin{align}
    \ket{x_1}
    &=
    \frac1{\sqrt n}
    \sum_{\ell=0}^{n-1}
    \cos\!\left(\frac{\pi\ell}{n}\right)\ket{\ell+1},
    \\
    \ket{x_2}
    &=
    \frac1{\sqrt n}
    \sum_{\ell=0}^{n-1}
    \sin\!\left(\frac{\pi\ell}{n}\right)\ket{\ell+1}.
\end{align}
The equally spaced angles imply
\begin{equation}
    \braket{x_r|x_s}=\frac12\delta_{rs},
    \qquad
    \sum_{r=1}^{2}\left|\braket{j|x_r}\right|^2
    =\frac1{d-1}.
    \label{eq:frame-identities}
\end{equation}
These are precisely the tightness and equal-norm conditions above.

\begin{lemma}[Equal-norm-frame expectation values]
The vector in Eq.~\eqref{eq:general-witness} satisfies
\begin{align}
    \bra{\Psi_d}\Delta\otimes\Delta\ket{\Psi_d}
    &=
    \frac1{d-1},
    \label{eq:expval-DD}
    \\
    \bra{\Psi_d}\Delta\otimes(I-\Omega)\ket{\Psi_d}
    &=
    -\frac14,
    \label{eq:expval-DJ}
    \\
    \bra{\Psi_d}(I-\Omega)\otimes\Delta\ket{\Psi_d}
    &=
    -\frac14,
    \label{eq:expval-JD}
    \\
    \bra{\Psi_d}(I-\Omega)\otimes(I-\Omega)\ket{\Psi_d}
    &=
    -\frac12.
    \label{eq:expval-JJ}
\end{align}
Here the two tensor factors refer to the two copies.
\end{lemma}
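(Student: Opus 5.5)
The plan is a direct calculation. The idea is to expand $\ket{\Psi_d}$ in the copy-wise ordering $(A_1B_1)\otimes(A_2B_2)$ and then evaluate each operator on each copy using only three facts. These are the Gram relations $\braket{x_r|x_s}=\tfrac12\delta_{rs}$, the orthogonality $\braket{0|x_r}=0$, and the equal-norm identity $\sum_r|\braket{j|x_r}|^2=1/(d-1)$ from Eq.~\eqref{eq:frame-identities}. I will also use that the frame vectors are real, so that $\Omega\ket{a}\ket{b}=(\sum_i a_ib_i)\sum_k\ket{kk}$ involves the bilinear pairing $\sum_i a_ib_i=\braket{\bar a|b}$ without complex conjugation.

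Regrouping the tensor factors of $\ket{u_r}_{A_1A_2}\ket{v_r}_{B_1B_2}$ gives
\begin{equation*}
\ket{\Psi_d}=\frac1{\sqrt2}\sum_{r=1}^{2}\Bigl(\ket{00}\ket{x_rx_r}-\ket{0x_r}\ket{x_r0}+\ket{x_r0}\ket{0x_r}-\ket{x_rx_r}\ket{00}\Bigr),
\end{equation*}
where the first ket in each product is copy $1$ ($A_1B_1$) and the second is copy $2$. I call these four families $T_1,\dots,T_4$.

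On a single copy, both $\Delta$ and $\Omega$ annihilate the mixed kets $\ket{0x_r}$ and $\ket{x_r0}$, because these have no component on $\ket{kk}$ when $\braket{0|x_r}=0$. Hence $T_2$ and $T_3$ contribute only through the identity parts of $I-\Omega$. In contrast, $\Delta\ket{x_rx_r}=\sum_j\braket{j|x_r}^2\ket{jj}$, $\Omega\ket{x_rx_s}=\tfrac12\delta_{rs}\sum_k\ket{kk}$, and $\Omega\ket{00}=\sum_k\ket{kk}$.

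The four expectations then follow in order.

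\textbf{Eq.~\eqref{eq:expval-DD}.} Here $(\Delta\otimes\Delta)\ket{\Psi_d}$ retains only $T_1$ and $T_4$, with their $\ket{x_rx_r}$ factors replaced by $\sum_j\braket{j|x_r}^2\ket{jj}$. The squared norm is $\tfrac12\cdot2\sum_j p_j^2=\sum_jp_j^2$, which equals $1/(d-1)$ by equal norm. This reproduces Eq.~\eqref{eq:diagonal-frame-bound}.

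\textbf{Eq.~\eqref{eq:expval-DJ}.} I split it as $\bra{\Psi_d}\Delta\otimes I\ket{\Psi_d}-\bra{\Psi_d}\Delta\otimes\Omega\ket{\Psi_d}$.
\begin{itemize}
\item The first term is $\|(\Delta\otimes I)\Psi_d\|^2$. It receives $\tfrac12\sum_{r,s}\braket{x_r|x_s}^2=\tfrac14$ from $T_1$, and $\tfrac12\sum_jp_j^2$-type terms from $T_4$.
\item For the second term, $\Omega$ on copy $2$ sends $T_1$ to $\tfrac12\cdot2\ket{00}\sum_k\ket{kk}$ and sends the $\Delta$-projected $T_4$ to a multiple of $\sum_j(\cdot)\ket{jj}\sum_k\ket{kk}$. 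Its overlaps with $\bra{\Psi_d}$ must be tracked using $\braket{kk|x_sx_s}=\braket{k|x_s}^2$ and $\braket{kk|00}=\delta_{k0}$.
\end{itemize}
The target $-\tfrac14$ is independent of $d$, so I expect the $\sum_jp_j^2$ pieces to cancel between the two terms. I will check this cancellation explicitly; it is where the coherent (negative) cross terms between $T_1$ and $T_4$ enter.

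\textbf{Eq.~\eqref{eq:expval-JD}.} This needs no separate computation. The symmetry of $\ket{\Psi_d}$ under exchanging the two copies (swapping $A_1B_1\leftrightarrow A_2B_2$ maps $T_1\leftrightarrow T_4$ and $T_2\leftrightarrow T_3$ up to an overall sign) gives it directly from Eq.~\eqref{eq:expval-DJ}.

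\textbf{Eq.~\eqref{eq:expval-JJ}.} I expand $(I-\Omega)\otimes(I-\Omega)=I\otimes I-\Omega\otimes I-I\otimes\Omega+\Omega\otimes\Omega$.
\begin{itemize}
\item $\bra{\Psi_d}I\otimes I\ket{\Psi_d}=1$ by normalization.
\item $\Omega\otimes\Omega$ gives zero. It acts nontrivially only on copy-wise diagonal pieces, and $\Omega\ket{00}\otimes\Omega\ket{x_rx_r}$ pairs against the antisymmetric combination $T_1-T_4$. Equivalently, $\operatorname{tr}M=0$ by the odd copy parity of Eq.~\eqref{eq:odd-copy-parity}.
\item $\bra{\Psi_d}I\otimes\Omega\ket{\Psi_d}$ equals $\|\Omega\text{-image}\|$-type sums. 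The overlap of $\sum_k\ket{kk}$ with $\ket{00}$ and with $\ket{x_sx_s}$ gives $1$ and $\tfrac12$ respectively, and combining with the prefactors $\tfrac1{\sqrt2}\cdot\tfrac12\cdot2$ gives $\tfrac34$.
\item By copy symmetry, $\bra{\Psi_d}\Omega\otimes I\ket{\Psi_d}$ is also $\tfrac34$.
\end{itemize}
Altogether this gives $1-\tfrac34-\tfrac34+0=-\tfrac12$.

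The main obstacle is the sign and prefactor bookkeeping in the coherent $\Omega$ terms. This is especially true for Eq.~\eqref{eq:expval-DJ}, where the $d$-dependent $\sum_jp_j^2$ contributions must cancel exactly. I would carry it out by writing every single-copy vector in the form $\alpha\ket{00}+\sum_{j\ge1}\beta_j\ket{jj}+(\text{off-diagonal})$ and computing all overlaps through the two scalars $\braket{x_r|x_s}$ and $\sum_j\braket{j|x_r}^2\braket{j|x_s}^2$. The final reduction then uses only Eq.~\eqref{eq:frame-identities}.
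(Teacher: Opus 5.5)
Your proposal is correct and is essentially the paper's own proof: Appendix~\ref{app:frame-expval} writes the same four-term expansion as a coefficient tensor built from the Gram operator $K=\sum_r\ket{x_r}\!\bra{x_r}$ and reduces everything to $K_{jj}=1/(d-1)$, $\operatorname{tr}K=1$, $\operatorname{tr}K^2=\tfrac12$; it splits $\Delta\otimes(I-\Omega)$ into $\Delta\otimes I$ minus $\Delta\otimes\Omega$, uses copy exchange for Eq.~\eqref{eq:expval-JD}, and obtains $1-\tfrac34-\tfrac34+0$ for Eq.~\eqref{eq:expval-JJ}, exactly as you do. One correction to your anticipated mechanism in Eq.~\eqref{eq:expval-DJ}: there are no $T_1$--$T_4$ cross terms, because after $\Delta$ acts on the first copy their first-copy parts $\ket{00}$ and $\sum_j p_j\ket{jj}$ are orthogonal; instead, the $\tfrac12\sum_j p_j^2$ from $T_4$ cancels because its second copy is $\ket{00}$, on which $I$ and $\Omega$ have the same expectation, and the $-\tfrac14$ comes entirely from $T_1$ as $\tfrac12\bigl[\operatorname{tr}K^2-(\operatorname{tr}K)^2\bigr]$.
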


\begin{proof}
A direct dimension-independent derivation from the Gram operator of the
equal-norm frame is given in Appendix~\ref{app:frame-expval}.
\end{proof}

The construction minimizes the positive diagonal
contribution within this ansatz, while the mixed and coherent terms
give fixed negative contributions.
Using Eq.~\eqref{eq:Qd},
\begin{align}
    \bra{\Psi_d}Q_d^{\otimes2}\ket{\Psi_d}
    &=
    \frac1{d-1}
    +\frac2d\left(-\frac14-\frac14\right)
    +\frac4{d^2}\left(-\frac12\right)
    \nonumber\\
    &=
    -\frac{d-2}{d^2(d-1)}.
\end{align}
Restoring the normalization of $\rho_{C,d}^{\Gamma}$ gives
\begin{equation}
    \boxed{
    \bra{\Psi_d}
    \left(\rho_{C,d}^{\Gamma}\right)^{\otimes2}
    \ket{\Psi_d}
    =
    -\frac{d-2}{d^2(d-1)(3d-2)^2}<0.
    }
    \label{eq:general-negative}
\end{equation}
Thus $\rho_{C,d}$ is two-copy distillable for every $d\geq3$,
which proves Theorem~\ref{thm:pointC}. At $d=3$,
Eq.~\eqref{eq:general-negative} gives $-1/882$. The corresponding
sparse qutrit vector and its realization by collective local filters
are given in Appendix~\ref{app:hidden}; they are the $d=3$
specialization of the same equal-norm-frame construction, not a
separate case.

\subsection{The fixed two-copy witness region}

The proof gives more than just an isolated counterexample. For every canonical
state, define
\begin{equation}
    w_{C,d}(b,c)
    =
    \bra{\Psi_d}
    \left(\rho_{b,c}^{\Gamma}\right)^{\otimes2}
    \ket{\Psi_d}.
    \label{eq:two-copy-region-definition}
\end{equation}
This is an explicit quadratic polynomial in $(b,c)$, given in
Eq.~\eqref{eq:fixed-witness-plane-general}. Therefore
\begin{equation}
    \mathcal R_d^{(2)}
    =
    \{(b,c)\text{ physical}:w_{C,d}(b,c)<0\}
    \label{eq:R2-definition}
\end{equation}
is an analytic inner bound on the two-copy-distillable region. Its
component containing $C$ is a hyperbolic lobe. Along the segments from
$C$ to $B$ and $G$, the first zeros occur at
\begin{align}
    t_B^{(d)}
    &=
    \frac{8(d-1)}
    {3(d-1)(d+2)+(3d-2)\sqrt{(d-1)(d+7)}},
    \label{eq:main-tB}\\
    t_G^{(d)}
    &=
    \frac{(d-2)(2d-1)}{5d^2-10d+4},
    \label{eq:main-tG}
\end{align}
respectively. Thus the same witness certifies finite segments of both
$CB$ and $CG$ in every dimension. For $d=3$, the cutoffs are
$t_B^{(3)}\simeq0.260990$ and $t_G^{(3)}=5/19$, producing the blue region
in Fig.~\ref{fig:qutrit-geometry}. These boundaries are witness
dependent: failure of $w_{C,d}$ to remain negative does not imply
undistillability.

\subsection{Point-\texorpdfstring{$C$}{C}-tailored three-copy inner bounds}
\label{sec:three-copy-fixed}

Every two-copy certificate also yields a three-copy certificate by
tensoring the corresponding Schmidt-rank-two witness with a suitable product
vector on the third copy. To certify additional states, we instead
construct genuinely three-copy Schmidt-rank-two witnesses chosen to
give strong negativity at point $C$. The explicit constructions are
given in Appendix~\ref{app:three-copy-fixed}: a sparse affine-pattern
witness $\ket{\Xi_3}$ for $d=3$, and a three-line frame witness
$\ket{\Theta_{d,3}}$ for $d\geq4$.
For convenience, let
$P_d=13d^3-34d^2+28d-8$.   
At point $C$ they satisfy
\begin{align}
    \bra{\Xi_3}
    \left(\rho_{C,3}^{\Gamma}\right)^{\otimes3}
    \ket{\Xi_3}
    &=
    -\frac1{3087},
    \label{eq:qutrit-three-copy-negative-main}
    \\
    \bra{\Theta_{d,3}}
    \left(\rho_{C,d}^{\Gamma}\right)^{\otimes3}
    \ket{\Theta_{d,3}}
    &=
    -\frac{4(d-2)}
    {d(3d-2)^2P_d},
    \quad (d\geq4).
    \label{eq:three-line-pointC-negative-main}
\end{align}

\begin{figure*}[!t]
\centering
\begin{tikzpicture}[
    x=5.1cm,
    y=3.7cm,
    every node/.style={font=\small},
    point/.style={circle,fill=black,inner sep=1.3pt}
]
    % Panel (a): d=3. Coordinates are an affine schematic of BCG.
    \begin{scope}[xshift=0cm]
        \coordinate (B3) at (0,0);
        \coordinate (C3) at (1,0);
        \coordinate (G3) at (0.78,0.78);

        \path[fill=orange!20]
            (C3)--
            plot[smooth] coordinates {
                (0.5026,0.0000) (0.5259,0.0243)
                (0.5535,0.0482) (0.5848,0.0711)
                (0.6190,0.0923) (0.6547,0.1113)
                (0.6906,0.1279) (0.7254,0.1423)
                (0.7584,0.1545) (0.7892,0.1648)
                (0.8176,0.1735) (0.8435,0.1810)
                (0.8671,0.1874) (0.8886,0.1928)
                (0.9081,0.1975) (0.9259,0.2017)
                (0.9421,0.2053)
            }--cycle;

        \path[fill=blue!18]
            (C3)--
            plot[smooth] coordinates {
                (0.7390,0.0000) (0.7471,0.0130)
                (0.7559,0.0264) (0.7656,0.0401)
                (0.7762,0.0542) (0.7876,0.0685)
                (0.7998,0.0828) (0.8128,0.0970)
                (0.8264,0.1110) (0.8405,0.1247)
                (0.8549,0.1380) (0.8696,0.1508)
                (0.8844,0.1629) (0.8992,0.1745)
                (0.9137,0.1854) (0.9281,0.1957)
                (0.9421,0.2053)
            }--cycle;

        \draw[thick] (B3)--(C3)--(G3)--cycle;
        \draw[orange!85!black,thick,dashed]
            plot[smooth] coordinates {
                (0.5026,0.0000) (0.5259,0.0243)
                (0.5535,0.0482) (0.5848,0.0711)
                (0.6190,0.0923) (0.6547,0.1113)
                (0.6906,0.1279) (0.7254,0.1423)
                (0.7584,0.1545) (0.7892,0.1648)
                (0.8176,0.1735) (0.8435,0.1810)
                (0.8671,0.1874) (0.8886,0.1928)
                (0.9081,0.1975) (0.9259,0.2017)
                (0.9421,0.2053)
            };
        \draw[blue!70!black,thick]
            plot[smooth] coordinates {
                (0.7390,0.0000) (0.7471,0.0130)
                (0.7559,0.0264) (0.7656,0.0401)
                (0.7762,0.0542) (0.7876,0.0685)
                (0.7998,0.0828) (0.8128,0.0970)
                (0.8264,0.1110) (0.8405,0.1247)
                (0.8549,0.1380) (0.8696,0.1508)
                (0.8844,0.1629) (0.8992,0.1745)
                (0.9137,0.1854) (0.9281,0.1957)
                (0.9421,0.2053)
            };

        \node[point,label=below left:$B$] at (B3) {};
        \node[point,label=below:$C$] at (C3) {};
        \node[point,label=right:$G$] at (G3) {};
        \node at (0.50,0.88) {(a) $d=3$};
    \end{scope}

    % Panel (b): d=4, representative of the d>=4 fixed construction.
    \begin{scope}[xshift=6.0cm]
        \coordinate (B4) at (0,0);
        \coordinate (C4) at (1,0);
        \coordinate (G4) at (0.78,0.78);

        \path[fill=orange!20]
            (C4)--
            plot[smooth] coordinates {
                (0.7037,0.0000) (0.7106,0.0148)
                (0.7188,0.0304) (0.7283,0.0465)
                (0.7392,0.0632) (0.7516,0.0801)
                (0.7654,0.0970) (0.7806,0.1136)
                (0.7970,0.1298) (0.8143,0.1452)
                (0.8322,0.1596) (0.8503,0.1731)
                (0.8685,0.1854) (0.8845,0.1998)
                (0.8992,0.2167) (0.9144,0.2329)
                (0.9300,0.2482)
            }--cycle;

        \path[fill=blue!18]
            (C4)--
            plot[smooth] coordinates {
                (0.7846,0.0000) (0.7875,0.0109)
                (0.7908,0.0226) (0.7946,0.0352)
                (0.7991,0.0487) (0.8044,0.0630)
                (0.8106,0.0783) (0.8178,0.0944)
                (0.8260,0.1112) (0.8354,0.1286)
                (0.8460,0.1465) (0.8578,0.1644)
                (0.8707,0.1823) (0.8845,0.1998)
                (0.8992,0.2167) (0.9144,0.2329)
                (0.9300,0.2482)
            }--cycle;

        \draw[thick] (B4)--(C4)--(G4)--cycle;
        \draw[orange!85!black,thick,dashed]
            plot[smooth] coordinates {
                (0.7037,0.0000) (0.7106,0.0148)
                (0.7188,0.0304) (0.7283,0.0465)
                (0.7392,0.0632) (0.7516,0.0801)
                (0.7654,0.0970) (0.7806,0.1136)
                (0.7970,0.1298) (0.8143,0.1452)
                (0.8322,0.1596) (0.8503,0.1731)
                (0.8685,0.1854) (0.8845,0.1998)
                (0.8992,0.2167) (0.9144,0.2329)
                (0.9300,0.2482)
            };
        \draw[blue!70!black,thick]
            plot[smooth] coordinates {
                (0.7846,0.0000) (0.7875,0.0109)
                (0.7908,0.0226) (0.7946,0.0352)
                (0.7991,0.0487) (0.8044,0.0630)
                (0.8106,0.0783) (0.8178,0.0944)
                (0.8260,0.1112) (0.8354,0.1286)
                (0.8460,0.1465) (0.8578,0.1644)
                (0.8707,0.1823) (0.8845,0.1998)
                (0.8992,0.2167) (0.9144,0.2329)
                (0.9300,0.2482)
            };

        \node[point,label=below left:$B$] at (B4) {};
        \node[point,label=below:$C$] at (C4) {};
        \node[point,label=right:$G$] at (G4) {};
        \node at (0.50,0.88) {(b) $d=4$};
    \end{scope}
    
    \node[align=left,text=blue!70!black,font=\footnotesize]
        at (1.95,-0.16)
        {fixed two-copy certificate};
    \node[align=left,text=orange!85!black,font=\footnotesize]
        at (1.95,-0.27)
        {additional fixed three-copy certificate};
        
\end{tikzpicture}
\caption{
Schematic fixed-witness regions within $BCG$. Blue denotes the
fixed two-copy point-$C$ lobe. Orange denotes the additional part of the
three-copy region certified by the point-$C$-tailored fixed witness;
the full certified three-copy set is the union of the blue and orange
parts. For $d=3$, the qutrit witness $\ket{\Xi_3}$ enlarges the lobe
throughout the displayed range and shares the same cutoff on $CG$. For
$d\geq4$, represented by $d=4$ in panel (b), the witness
$\ket{\Theta_{d,3}}$ gives an additional bulge mainly toward $B$ and the
interior, while the fixed two-copy certificate remains stronger near
$CG$. Only analytic fixed certificates are shown. The panels use affine 
schematic coordinates and are not on a common physical scale.
}
\label{fig:fixed-three-copy-schematic}
\end{figure*}

Their fixed cubic expectations define further analytic inner bounds
away from $C$. The full region certified at three copies is the union
of these fixed-witness negative sets with the two-copy lobe
$\mathcal R_d^{(2)}$, since every state certified at two copies is
automatically certified at three copies. These are inner bounds only
and are not claims about the fully optimized three-copy boundary.

Figure~\ref{fig:fixed-three-copy-schematic} illustrates the resulting
hierarchy. For $d=3$, the sparse three-copy witness enlarges the
certified lobe throughout the displayed range. For $d\geq4$, the
three-line witness improves the certificate mainly toward $B$ and the
interior of $BCG$, whereas the fixed two-copy witness remains stronger
near the $CG$ side.

\section{Discussion}

Point $C$ supplies a counterexample to the original conjecture in every
finite local dimension: it is one-copy undistillable but already
two-copy distillable. The equal-norm tight-frame construction gives a
single mechanism and a single analytic expectation formula for all
$d\geq3$, showing that the result is not a low-dimensional accident.
For qutrits, the same construction admits a particularly simple
collective-filtering realization, which also gives the route by 
which the certificate was first found. This connection with
hidden teleportation is developed in Appendix~\ref{app:hidden}.
Because the canonical filtering-and-twirling reduction is many-to-one,
this conclusion is not confined to the symmetric representative:
two-copy distillability also holds for every NPT state that can be
mapped to $\rho_{C,d}$ by the same stochastic local preprocessing.

Together with the recent Werner-state results and the propagation
lemma of DiVincenzo \textit{et al.}, our result reveals qualitatively
different behavior within the canonical one-copy-undistillable
region. The triangle $BGK$ is rigorously two-copy undistillable,
whereas an open lobe around $C$ is rigorously two-copy distillable.
The point-$C$-tailored three-copy witnesses enlarge the certified
region in some directions, although they do not uniformly improve on
the two-copy witness. The regions certified by these witnesses are therefore analytic inner bounds on the distillable sets rather than optimized distillability boundaries.

The relative simplicity of the vertices should not be mistaken for
simplicity of the triangle they span. Finite-copy distillability is
determined by $(\rho^\Gamma)^{\otimes k}$ and therefore does not
interpolate directly under convex mixtures of states. In addition,
the product-projector argument that propagates undistillability from
$G$ throughout $BGK$ stops at the line $BG$. Toward $C$, the
corresponding positive corrections have entangled ranges, and projecting 
some copies onto these subspaces can increase the effective Schmidt 
rank on the remaining ones. At the same time, the fixed witness
constructed at $C$ loses negativity before reaching the opposite
boundary, and there is no reason to expect the same Schmidt-rank-two
vector to remain optimal throughout the plane.

The unresolved part of $BCG$ is therefore not merely a region left
between known vertices. It is precisely where neither the propagation
mechanism from the Werner side nor the explicit certificates from
point $C$ remain decisive. Resolving it will likely require either a
more global understanding of the Schmidt-rank-two optimization or new
block-positivity arguments adapted to the canonical symmetry. Even at
two copies, the question is now geometrically concrete: does the
distillable region around $C$ extend all the way to the boundary
$BG$, or does a new two-copy-undistillable region intervene inside
$BCG$? Answering this would further clarify how finite-copy
distillability can change within one of the principal symmetry-reduced
settings for the NPT-distillability problem.

\begin{acknowledgments}
KSC acknowledges support from the National Science and Technology Council, 
Taiwan (Grants No. 109-2112-M-006-010-MY3, 112-2628-M-006-007-MY4). 

\textit{AI-use disclosure.—}
GNMT acknowledges substantial assistance from OpenAI's ChatGPT, using
the GPT-5.6 Thinking model, in the discovery and development of this
work. The tool helped with scientific reasoning, the numerical 
filter search, the derivation of exact two-copy and three-copy certificates, 
and the organization of the arguments. The interaction was initiated and 
directed by GNMT. The authors independently reconstructed and verified 
the calculations and assume full responsibility for the correctness of 
the results. A more detailed account of the AI-assisted discovery process 
is given in Appendix~\ref{app:hidden}.
\end{acknowledgments}

\bibliography{Paper}

\clearpage
\appendix
\onecolumngrid

\section{Coefficient-matrix motivation and equal-norm-frame
expectation values}
\label{app:frame-expval}

\subsection{Coefficient-matrix form of the point-\texorpdfstring{$C$}{C} terms}

Let
\begin{equation}
    \ket{\Psi}=\operatorname{vec}_{\operatorname{r}}(M)
    :=
    \sum_{a,j,b,k=0}^{d-1}
    M_{(a,j),(b,k)}
    \ket{a}_{A_1}\ket{j}_{A_2}
    \ket{b}_{B_1}\ket{k}_{B_2},
    \label{eq:general-M-vectorization}
\end{equation}
where $M$ is the $d^2\times d^2$ coefficient matrix of
$\ket{\Psi}$ across $(A_1A_2):(B_1B_2)$. With this convention, the
Schmidt rank of $\ket{\Psi}$ equals the matrix rank of $M$, and
normalization is equivalent to $\|M\|_{\mathrm F}=1$.

Write $M$ as a $d\times d$ array of $d\times d$ blocks,
\begin{equation}
    M=(M_{ab})_{a,b=0}^{d-1},
    \qquad
    (M_{ab})_{jk}
    =
    M_{(a,j),(b,k)}.
\end{equation}
Thus the block indices $a,b$ refer to the first copy, while the
indices $j,k$ inside each block refer to the second copy.

The actions of $\Delta$ and $\Omega$ have simple interpretations.
The projector
\begin{equation}
    \Delta=\sum_i\ket{ii}\bra{ii}
\end{equation}
selects equal Alice and Bob indices and adds the corresponding
squared amplitudes. On the other hand,
\begin{equation}
    \Omega=\sum_{i,j}\ket{ii}\bra{jj}
    =
    \ket{s}\bra{s},
    \qquad
    \ket{s}=\sum_i\ket{ii},
\end{equation}
adds the selected amplitudes coherently before taking their squared
magnitude. In what follows, the first tensor factor acts on $A_1B_1$
and the second on $A_2B_2$.

Expanding
\begin{equation}
    Q_d^{\otimes2}
    =
    \Delta\otimes\Delta+
    \frac{2}{d}
    \left[
        \Delta\otimes(I-\Omega)
        +(I-\Omega)\otimes\Delta
    \right]+
    \frac{4}{d^2}
    (I-\Omega)\otimes(I-\Omega),
\label{eq:general-M-Q-expansion}
\end{equation}
we evaluate the four terms separately.

First, $\Delta\otimes\Delta$ imposes $a=b=i$ on the first copy
and $j=k$ on the second copy. Hence
\begin{equation}
    \bra{\Psi}\Delta\otimes\Delta\ket{\Psi}
    =
    \sum_{i,j}
    \left|(M_{ii})_{jj}\right|^2.
    \label{eq:general-M-DD}
\end{equation}

For $\Delta\otimes I$, the condition $a=b=i$ selects the diagonal
block $M_{ii}$, giving $\sum_i\|M_{ii}\|_{\mathrm F}^2$. Replacing
$I$ by $\Omega$ coherently sums the diagonal entries of this block,
giving $\sum_i|\operatorname{tr}M_{ii}|^2$. Therefore,
\begin{equation}
    \bra{\Psi}\Delta\otimes(I-\Omega)\ket{\Psi}
    =
    \sum_i
    \left(
        \|M_{ii}\|_{\mathrm F}^2
        -
        |\operatorname{tr}M_{ii}|^2
    \right).
    \label{eq:general-M-DJ}
\end{equation}

With the copies reversed, $I\otimes\Delta$ imposes $j=k$ and gives
$\sum_j\sum_{a,b}|(M_{ab})_{jj}|^2$. The operator
$\Omega\otimes\Delta$ then coherently sums the remaining diagonal
first-copy indices, giving
$\sum_j|\sum_i(M_{ii})_{jj}|^2$. Thus
\begin{equation}
    \bra{\Psi}(I-\Omega)\otimes\Delta\ket{\Psi}=
    \sum_j
    \left[
        \sum_{a,b}|(M_{ab})_{jj}|^2
        -
        \left|
            \sum_i(M_{ii})_{jj}
        \right|^2
    \right].
\label{eq:general-M-JD}
\end{equation}

Finally, expand
\begin{equation}
    (I-\Omega)\otimes(I-\Omega)
    =
    I\otimes I-\Omega\otimes I
    -I\otimes\Omega+\Omega\otimes\Omega.
\end{equation}
The four relevant expectation values are
\begin{equation}
\begin{aligned}
    \bra{\Psi}I\otimes I\ket{\Psi}
    &=
    \|M\|_{\mathrm F}^2,\\
    \bra{\Psi}\Omega\otimes I\ket{\Psi}
    &=
    \left\|\sum_iM_{ii}\right\|_{\mathrm F}^2,\\
    \bra{\Psi}I\otimes\Omega\ket{\Psi}
    &=
    \sum_{a,b}|\operatorname{tr}M_{ab}|^2,\\
    \bra{\Psi}\Omega\otimes\Omega\ket{\Psi}
    &=
    |\operatorname{tr}M|^2.
\end{aligned}
\end{equation}
Here $\Omega\otimes I$ coherently sums the diagonal blocks, while
$I\otimes\Omega$ takes the trace of each block. 
When $\Omega$ acts on both copies, all diagonal coefficients are
summed, producing $\operatorname{tr}M$.
Consequently,
\begin{equation}
    \bra{\Psi}
    (I-\Omega)\otimes(I-\Omega)
    \ket{\Psi}=
    \|M\|_{\mathrm F}^2
    -
    \left\|\sum_iM_{ii}\right\|_{\mathrm F}^2
    -
    \sum_{a,b}|\operatorname{tr}M_{ab}|^2
    +
    |\operatorname{tr}M|^2.
\label{eq:general-M-JJ}
\end{equation}

These expressions suggest seeking a low-rank coefficient matrix for
which the coherent sums are large while the total trace vanishes.
A simple way to impose the latter while treating the copies symmetrically is
\begin{equation}
    SMS^T=-M,
\end{equation}
where $S\ket{a}\ket{j}=\ket{j}\ket{a}$ interchanges the two local
copies. Indeed, since $S^TS=I$,
\begin{equation}
    \operatorname{tr}M
    =
    \operatorname{tr}(SMS^T)
    =
    -\operatorname{tr}M,
\end{equation}
and therefore $\operatorname{tr}M=0$.

To maintain Schmidt rank at most two, take
\begin{equation}
    M=\frac{1}{\sqrt2}\sum_{r=1}^{2}u_rv_r^T.
\end{equation}
A termwise realization of the required swap parity is
\begin{equation}
    Su_r=u_r,
    \qquad
    Sv_r=-v_r.
\end{equation}
Choosing a distinguished basis vector $\ket0$ and vectors
$\ket{x_r}\perp\ket0$, the simplest such local modes are
\begin{equation}
    \ket{u_r}
    =
    \ket0\ket{x_r}+\ket{x_r}\ket0,
    \qquad
    \ket{v_r}
    =
    \ket0\ket{x_r}-\ket{x_r}\ket0.
\end{equation}
This recovers the ansatz of Eq.~\eqref{eq:general-witness}. The
remaining task is to choose the vectors $\ket{x_r}$ so that the
positive diagonal term in Eq.~\eqref{eq:general-M-DD} is as small as
possible.

\subsection{Evaluation for the equal-norm frame}

Put $m=d-1$ and define the Gram operator
\begin{equation}
    K=\sum_{r=1}^{2}\ket{x_r}\!\bra{x_r}
\end{equation}
on $\operatorname{span}\{\ket1,\ldots,\ket m\}$. Extend its matrix by
setting $K_{0j}=K_{j0}=0$. The frame identities
\eqref{eq:frame-identities} imply
\begin{equation}
    K_{jj}=\frac1m,
    \qquad
    \operatorname{tr}K=1,
    \qquad
    \operatorname{tr}K^2=\frac12.
    \label{eq:K-identities}
\end{equation}

Write the coefficients of $\ket{\Psi_d}$ in the ordering
$(\alpha,\gamma;\beta,\delta)=(A_1,A_2;B_1,B_2)$. Expanding
Eq.~\eqref{eq:general-witness} gives
\begin{align}
    \Psi_{\alpha\gamma;\beta\delta}
    &=
    \frac1{\sqrt2}\Big(
    \delta_{\alpha0}\delta_{\beta0}K_{\gamma\delta}
    -\delta_{\alpha0}\delta_{\delta0}K_{\gamma\beta}
    +\delta_{\gamma0}\delta_{\beta0}K_{\alpha\delta}
    -\delta_{\gamma0}\delta_{\delta0}K_{\alpha\beta}
    \Big).
    \label{eq:Psi-coefficients}
\end{align}
This coefficient formula reduces all the required expectation values to
Eq.~\eqref{eq:K-identities}.

First, $\Delta\otimes\Delta$ selects
$\alpha=\beta$ and $\gamma=\delta$. The only nonzero contributions
have one of $\alpha,\gamma$ equal to zero, and therefore
\begin{equation}
    \bra{\Psi_d}\Delta\otimes\Delta\ket{\Psi_d}
    =
    \sum_{j=1}^{m}K_{jj}^2
    =
    \frac1m.
\end{equation}

To evaluate $\bra{\Psi_d}\Omega\otimes I\ket{\Psi_d}$, define
\begin{equation}
    S_{\gamma\delta}
    =
    \sum_{\alpha=0}^{m}
    \Psi_{\alpha\gamma;\alpha\delta}.
\end{equation}
Equation~\eqref{eq:Psi-coefficients} gives
\begin{equation}
    S_{\gamma\delta}
    =
    \frac1{\sqrt2}
    \left(
        K_{\gamma\delta}
        -
        \delta_{\gamma0}\delta_{\delta0}\operatorname{tr}K
    \right).
\end{equation}
Hence
\begin{align}
    \bra{\Psi_d}\Omega\otimes I\ket{\Psi_d}
    &=
    \sum_{\gamma,\delta}|S_{\gamma\delta}|^2
    =
    \frac12
    \left[
        \operatorname{tr}K^2+
        (\operatorname{tr}K)^2
    \right]
    =
    \frac34.
    \label{eq:Omega-I}
\end{align}
The same value holds for $I\otimes\Omega$. Moreover,
\begin{equation}
    \bra{\Psi_d}\Omega\otimes\Omega\ket{\Psi_d}
    =
    \left|\sum_{\gamma}S_{\gamma\gamma}\right|^2
    =0,
    \label{eq:Omega-Omega}
\end{equation}
because the $K$ contribution cancels the $(0,0)$ contribution.

The expectation value of $\Delta\otimes I$ is
\begin{align}
    \bra{\Psi_d}\Delta\otimes I\ket{\Psi_d}
    &=
    \sum_{\alpha,\gamma,\delta}
    \left|\Psi_{\alpha\gamma;\alpha\delta}\right|^2
    =
    \frac12\operatorname{tr}K^2
    +
    \frac12\sum_{j=1}^{m}K_{jj}^2
    =
    \frac14+\frac1{2m}.
    \label{eq:Delta-I}
\end{align}
Similarly,
\begin{align}
    \bra{\Psi_d}\Delta\otimes\Omega\ket{\Psi_d}
    &=
    \sum_{\alpha}
    \left|
        \sum_{\gamma}
        \Psi_{\alpha\gamma;\alpha\gamma}
    \right|^2
    =
    \frac12(\operatorname{tr}K)^2
    +
    \frac12\sum_{j=1}^{m}K_{jj}^2
    =
    \frac12+\frac1{2m}.
    \label{eq:Delta-Omega}
\end{align}
Subtracting Eq.~\eqref{eq:Delta-Omega} from
Eq.~\eqref{eq:Delta-I} yields
\begin{equation}
    \bra{\Psi_d}
    \Delta\otimes(I-\Omega)
    \ket{\Psi_d}
    =
    -\frac14.
\end{equation}
Exchange symmetry gives the corresponding expectation value with the
two copies reversed.
Finally, using normalization together with
Eqs.~\eqref{eq:Omega-I} and~\eqref{eq:Omega-Omega},
\begin{align}
    \bra{\Psi_d}
    (I-\Omega)\otimes(I-\Omega)
    \ket{\Psi_d}
    &=
    1-\frac34-\frac34+0
    =
    -\frac12.
\end{align}
This proves Eqs.~\eqref{eq:expval-DD}--\eqref{eq:expval-JJ}.

\section{Filtering and hidden teleportation interpretation}
\label{app:hidden}

The filtering formulation below provides the operational route by
which the point-$C$ certificate was found. It was motivated by the
notion of hidden teleportation power under local
filtering~\cite{HiddenTeleportationPower}. Clarifying its equivalence with $k$-copy distillability when collective local filtering is allowed subsequently led to a search for such filters at point $C$.

For a state $\tau$ on $\mathbb C^D\otimes\mathbb C^D$, define its
fully entangled fraction by
\begin{equation}
    F_D(\tau)
    =
    \max_{\ket{\Phi_D}}
    \bra{\Phi_D}\tau\ket{\Phi_D},
\end{equation}
where the maximum is over maximally entangled states of local
dimension $D$. The state is useful for standard teleportation exactly
when~\cite{HorodeckiTeleportation}
\begin{equation}
    F_D(\tau)>\frac1D.
\end{equation}

\begin{theorem}[Finite-copy distillability and filtered teleportation usefulness]
Let $\rho$ act on $\mathcal H_A\otimes\mathcal H_B$, and let $k\geq1$.
The following are equivalent:
\begin{enumerate}
    \item $\rho$ is $k$-copy distillable;

    \item for some $D\geq2$, there exist local filters
    \begin{equation}
        A:\mathcal H_A^{\otimes k}\longrightarrow\mathbb C^D,
        \qquad
        B:\mathcal H_B^{\otimes k}\longrightarrow\mathbb C^D,
    \end{equation}
    with nonzero success probability such that the normalized output
    \begin{equation}
        \sigma_{A,B}
        =
        \frac{
            (A\otimes B)\rho^{\otimes k}
            (A^\dagger\otimes B^\dagger)
        }{
            \operatorname{tr}\!\left[
                (A\otimes B)\rho^{\otimes k}
                (A^\dagger\otimes B^\dagger)
            \right]
        }
    \end{equation}
    satisfies $F_D(\sigma_{A,B})>1/D$;

    \item the output dimension can be chosen as $D=2$, with
    $F_2(\sigma_{A,B})>1/2$.
\end{enumerate}
\label{thm:filtering-equivalence}
\end{theorem}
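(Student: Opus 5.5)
I will prove the cycle of implications $1\Rightarrow3\Rightarrow2\Rightarrow1$, using the $k$-copy criterion of Eq.~\eqref{eq:k-copy-criterion} as the working definition of item~1.

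\emph{Step 1: $3\Rightarrow2$.} This direction is trivial: item~3 is the special case $D=2$ of item~2.

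\emph{Step 2: $1\Rightarrow3$.} Suppose $\ket{\psi}$ has Schmidt rank at most two across $A_1\cdots A_k:B_1\cdots B_k$ and satisfies $\bra{\psi}(\rho^\Gamma)^{\otimes k}\ket{\psi}<0$.
\begin{itemize}
\item[(a)] Write $\ket{\psi}=\sum_{r=1}^2 s_r\ket{e_r}\ket{f_r}$. Define rank-two filters $A=\sum_r\ket{r}\bra{e_r}$ and $B=\sum_r\ket{r}\bra{\bar f_r}$, using the complex conjugate on Bob's side so that the partial transpose is respected.
\item[(b)] Then $\bra{\psi}(\rho^\Gamma)^{\otimes k}\ket{\psi}$ equals $\bra{\phi}\sigma'^{\Gamma}\ket{\phi}$ up to the positive normalization, where $\sigma'=(A\otimes B)\rho^{\otimes k}(A\otimes B)^\dagger$ and $\ket{\phi}=\sum_r s_r\ket{rr}$. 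Hence the two-qubit output $\sigma_{A,B}$ is NPT.
\item[(c)] The success probability is nonzero. If $\sigma'=0$, the expectation value in (b) would vanish, contradicting strict negativity.
\item[(d)] A two-qubit NPT state is entangled and therefore distillable. Concretely, I would apply a further local filter $\mathrm{diag}(\sqrt{s_1},\sqrt{s_2})$ together with a local unitary. This maps the negative eigenvector direction to a Bell-state direction, and a standard argument gives $F_2>1/2$ after filtering: any two-qubit NPT state can be locally filtered to have fully entangled fraction above $1/2$, a result due to Verstraete \emph{et al.} and Horodecki. Composing these filters with $A,B$ keeps them local, so item~3 holds.
\end{itemize}

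\emph{Step 3: $2\Rightarrow1$.}
\begin{itemize}
\item[(a)] If $F_D(\sigma)>1/D$, then $\sigma$ violates the reduction criterion. Indeed, $\bra{\Phi_D}(I\otimes\sigma_B-\sigma)\ket{\Phi_D}=1/D-F_D<0$.
\item[(b)] By the reduction-criterion result cited in Sec.~II~\cite{Horodecki1999_redcrit}, $\sigma$ is one-copy distillable. Explicitly, there is a Schmidt-rank-two $\ket{\chi}$ with $\bra{\chi}\sigma^\Gamma\ket{\chi}<0$; one can take $\ket{\chi}$ to be a suitable two-dimensional local projection of the maximally entangled vector.
\item[(c)] Pull $\ket{\chi}$ back through the filters: set $\ket{\psi}=(A^\dagger\otimes \bar B^{\,T})\ket{\chi}$, i.e. the transpose-compatible lift of $B$. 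This gives $\bra{\psi}(\rho^\Gamma)^{\otimes k}\ket{\psi}\propto\bra{\chi}\sigma^\Gamma\ket{\chi}<0$.
\item[(d)] Local operators cannot increase Schmidt rank, so $\ket{\psi}$ has Schmidt rank at most two. Hence $\rho$ is $k$-copy distillable.
\end{itemize}

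\emph{Main obstacle.} I expect the delicate points to be the bookkeeping of transposes and conjugations when pulling back through $\Gamma$, namely the identity $[(A\otimes B)X(A\otimes B)^\dagger]^\Gamma=(A\otimes\bar B)X^\Gamma(A\otimes\bar B)^\dagger$, and the two-qubit fact in Step~2(d). I would handle the latter by citing the known theorem that every entangled two-qubit state becomes teleportation-useful under local filtering.
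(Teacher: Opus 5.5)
Your argument is correct. The implication $1\Rightarrow3$ is essentially the paper's: the paper simply cites Verstraete--Verschelde. Your explicit filter $\mathrm{diag}(\sqrt{s_1},\sqrt{s_2})$ on both sides also works, since it turns $\bra{\phi}\omega^\Gamma\ket{\phi}<0$ into overlap above $\frac12\operatorname{tr}$ with $\ket{\psi^-}$.

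The genuine difference is in $2\Rightarrow1$. You observe that $F_D(\sigma)>1/D$ is already a violation of the reduction criterion and then invoke the Horodecki theorem. The paper instead proves this step directly. After rotating the maximizer to $\ket{\Phi_D}$, suppose every coordinate block $\omega_{ij}$ were separable. Then $\bra{\phi_{ij}}\sigma\ket{\phi_{ij}}\leq\frac12\operatorname{tr}\omega_{ij}$ for each pair, and summing over pairs gives $r\leq s$, hence $DF_D(\sigma)=q+r\leq1$, a contradiction.

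What each route buys:
\begin{itemize}
\item The paper's route is elementary, needs no twirling and no external theorem, and pinpoints an explicit entangled coordinate $2\otimes2$ block.
\item Your route is shorter and ties the statement to a known criterion. However, it rests entirely on the cited theorem holding in its \emph{one-copy} form.
\end{itemize}
The one-copy form does hold: Horodecki's protocol (filter, $U\otimes\bar U$ twirl, $2\otimes2$ projection) acts on a single copy, and the twirl can be removed by an averaging argument. You should state this explicitly, because mere asymptotic distillability of $\sigma$ would only pull back to $kn$-copy, not $k$-copy, distillability of $\rho$.

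There are two small slips.
\begin{itemize}
\item Your explicit choice of $\ket{\chi}$ is wrong as stated. The projection $\ket{\phi_{ij}}$ of $\ket{\Phi_D}$ is not a negative vector of $\sigma^\Gamma$: for $\sigma=\ket{\Phi_D}\!\bra{\Phi_D}$ one gets $\bra{\phi_{ij}}\sigma^\Gamma\ket{\phi_{ij}}=1/D>0$. The correct vector is the block singlet $\ket{\psi^-_{ij}}$, since $\bra{\psi^-_{ij}}\sigma^\Gamma\ket{\psi^-_{ij}}=\frac12\operatorname{tr}\omega_{ij}-\bra{\phi_{ij}}\sigma\ket{\phi_{ij}}$. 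The paper's summation argument is exactly what shows that some pair makes this negative.
\item The lift in Step~3(c) should be $(A\otimes\bar B)^\dagger\ket{\chi}=(A^\dagger\otimes B^T)\ket{\chi}$, not $A^\dagger\otimes\bar B^{T}=A^\dagger\otimes B^\dagger$. The identity you state in your last paragraph already yields the correct form.
\end{itemize}
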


\begin{proof}
If $\rho$ is $k$-copy distillable, local rank-two maps extract an
entangled two-qubit state from $\rho^{\otimes k}$ with nonzero
probability. Every entangled two-qubit state can subsequently be
filtered into a state with fully entangled fraction greater than
$1/2$~\cite{VerstraeteVerschelde}. Composing the maps proves
statement~3, and statement~3 implies statement~2.

Conversely, suppose a filtered $D\times D$ state $\sigma$ satisfies
$F_D(\sigma)>1/D$. Absorb a local unitary into a filter so that the
maximizing state is
\begin{equation}
    \ket{\Phi_D}
    =
    \frac1{\sqrt D}\sum_{i=0}^{D-1}\ket{ii}.
\end{equation}
For each $i<j$, let
\begin{equation}
    P_{ij}=\ket{i}\!\bra{i}+\ket{j}\!\bra{j},
    \qquad
    \omega_{ij}
    =
    (P_{ij}\otimes P_{ij})\sigma(P_{ij}\otimes P_{ij}),
\end{equation}
and define
\begin{equation}
    \ket{\phi_{ij}}
    =
    \frac{\ket{ii}+\ket{jj}}{\sqrt2}.
\end{equation}
Assume every nonzero $\omega_{ij}$ is separable. Then
\begin{equation}
    \bra{\phi_{ij}}\sigma\ket{\phi_{ij}}
    \leq\frac12\operatorname{tr}\omega_{ij}.
\end{equation}
Writing
\begin{align}
    q&=\sum_i\bra{ii}\sigma\ket{ii},
    &
    r&=\sum_{i\neq j}\bra{ii}\sigma\ket{jj},
    &
    s&=\sum_{i\neq j}\bra{ij}\sigma\ket{ij},
\end{align}
we have $q+s=1$. Summing the preceding inequalities over all pairs
gives $r\leq s$, whereas
\begin{equation}
    D F_D(\sigma)=q+r\leq q+s=1,
\end{equation}
contradicting $F_D(\sigma)>1/D$. Thus at least one local
$2\otimes2$ projection is entangled. Composing this projection with
the original filters extracts an entangled two-qubit state from
$\rho^{\otimes k}$ and proves $k$-copy distillability.
\end{proof}

The theorem shows that the output dimension is irrelevant to the
existence question. For a $d\times d$ state $\rho$, if in addition
\begin{equation}
    F_{d^k}\!\left(\rho^{\otimes k}\right)\leq\frac1{d^k},
\end{equation}
then filtered teleportation usefulness is genuinely hidden, and
\begin{equation}
    \text{$k$-copy hidden teleportation power}
    \iff
    \text{$k$-copy distillability}.
\end{equation}

\subsection*{AI-assisted discovery of the point-\texorpdfstring{$C$}{C} certificate}
The interaction that led to the present construction began with an
attempt by GNMT to clarify the equivalence between $k$-copy
distillability and $k$-copy teleportation usefulness under collective
local filtering. OpenAI's ChatGPT, using the GPT-5.6 Thinking model,
assisted in establishing the relevant filtering argument. This led to
the proposal of a numerical search for collective local filters at
point $C$ of the canonical NPT family. ChatGPT assisted in formulating
and implementing the search, which identified qutrit filters producing
a teleportation-useful two-qubit state. It subsequently assisted in
reducing the numerical solution to an exact qutrit certificate. GNMT
then suggested choosing the qutrit solution in a form that could be
incorporated into a common construction for all $d\geq3$, which led to
the uniform equal-norm tight-frame formulation. ChatGPT further
assisted in deriving this all-dimensional construction, extending the
fixed witness away from point $C$, and developing the
point-$C$-tailored three-copy certificates. The authors independently
reconstructed and verified all calculations used in the manuscript.

\subsection{Qutrit specialization of the uniform witness}

For $d=3$, the equal-norm frame consists of the two orthogonal
coordinate directions
\begin{equation}
    \ket{x_1}=\frac{\ket1}{\sqrt2},
    \qquad
    \ket{x_2}=\frac{\ket2}{\sqrt2}.
\end{equation}
Thus
\begin{align}
    \ket{u_1}&=\frac{\ket{01}+\ket{10}}{\sqrt2},
    &
    \ket{u_2}&=\frac{\ket{02}+\ket{20}}{\sqrt2},
    \\
    \ket{v_1}&=\frac{\ket{01}-\ket{10}}{\sqrt2},
    &
    \ket{v_2}&=\frac{\ket{02}-\ket{20}}{\sqrt2},
\end{align}
and the uniform witness becomes
\begin{equation}
    \ket{\Psi_3}
    =
    \frac1{\sqrt2}
    \left(
        \ket{u_1}_A\ket{v_1}_B
        +
        \ket{u_2}_A\ket{v_2}_B
    \right).
    \label{eq:qutrit-witness}
\end{equation}
Equivalently,
\begin{align}
    \ket{\Psi_3}
    =\frac{1}{2\sqrt2}\Big[{}
    &(\ket{01}+\ket{10})_A
      \otimes(\ket{01}-\ket{10})_B
    \nonumber\\
    &+
    (\ket{02}+\ket{20})_A
      \otimes(\ket{02}-\ket{20})_B
    \Big].
\end{align}
The two product terms have mutually orthogonal local factors, so
$\ket{\Psi_3}$ is normalized and has Schmidt rank exactly two. The
all-dimensional expectation formula gives
\begin{equation}
    \bra{\Psi_3}
    \left(\rho_{C,3}^{\Gamma}\right)^{\otimes2}
    \ket{\Psi_3}
    =-\frac1{882}<0.
    \label{eq:qutrit-negative}
\end{equation}
The filters below implement precisely these symmetric and
antisymmetric frame modes.

\subsection{Exact qutrit filters}

On each local two-qutrit space, use the ordered basis
\begin{equation}
\{
\ket{00},\ket{01},\ket{02},
\ket{10},\ket{11},\ket{12},
\ket{20},\ket{21},\ket{22}
\}.
\end{equation}
Define $A_3,B_3:\mathbb C^9\to\mathbb C^2$ by
\begin{equation}
    A_3
    =
    \frac1{\sqrt2}
    \begin{pmatrix}
        0&1&0&1&0&0&0&0&0\\
        0&0&1&0&0&0&1&0&0
    \end{pmatrix},
    \label{eq:filterA3}
\end{equation}
and
\begin{equation}
    B_3
    =
    \frac1{\sqrt2}
    \begin{pmatrix}
        0&0&-1&0&0&0&1&0&0\\
        0&1& 0&-1&0&0&0&0&0
    \end{pmatrix}.
    \label{eq:filterB3}
\end{equation}
The rows of $A_3$ are $\bra{u_1}$ and $\bra{u_2}$, while the
rows of $B_3$ are $-\bra{v_2}$ and $\bra{v_1}$. In particular,
their rows are orthonormal, so they define valid stochastic local
filters.

The unnormalized output from two copies of $\rho_{C,3}$ is
\begin{equation}
    \omega_3
    =
    \frac1{882}
    \begin{pmatrix}
        10&0&0&10\\
        0&9&0&0\\
        0&0&9&0\\
        10&0&0&10
    \end{pmatrix},
\end{equation}
with success probability
\begin{equation}
    p_3
    =
    \operatorname{tr}\omega_3
    =
    \frac{19}{441}.
\end{equation}
The normalized state is
\begin{equation}
    \sigma_3
    =
    \frac1{38}
    \begin{pmatrix}
        10&0&0&10\\
        0&9&0&0\\
        0&0&9&0\\
        10&0&0&10
    \end{pmatrix}.
\end{equation}
In the Bell basis,
\begin{equation}
    \sigma_3
    =
    \frac{10}{19}\ket{\Phi^+}\!\bra{\Phi^+}
    +
    \frac{9}{38}\ket{\Psi^+}\!\bra{\Psi^+}
    +
    \frac{9}{38}\ket{\Psi^-}\!\bra{\Psi^-}.
\end{equation}
Therefore
\begin{equation}
    F_2(\sigma_3)=\frac{10}{19}>\frac12.
\end{equation}

The negative eigenvalue of $\sigma_3^\Gamma$ is $-1/38$, with
eigenvector
\begin{equation}
    \ket{\eta}
    =
    \frac{\ket{01}-\ket{10}}{\sqrt2}.
\end{equation}
Indeed,
\begin{equation}
    A_3^\dagger\ket0=\ket{u_1},
    \qquad
    A_3^\dagger\ket1=\ket{u_2},
\end{equation}
while
\begin{equation}
    B_3^T\ket0=-\ket{v_2},
    \qquad
    B_3^T\ket1=\ket{v_1}.
\end{equation}
It follows that
\begin{align}
    (A_3^\dagger\otimes B_3^T)\ket{\eta}
    &=
    \frac1{\sqrt2}
    \left(
        \ket{u_1}\ket{v_1}
        +
        \ket{u_2}\ket{v_2}
    \right)
    \nonumber\\
    &=
    \ket{\Psi_3}.
\end{align}
Since
\begin{equation}
    \omega_3^\Gamma
    =
    (A_3\otimes\overline B_3)
    \left(\rho_{C,3}^{\Gamma}\right)^{\otimes2}
    (A_3^\dagger\otimes B_3^T),
\end{equation}
the lifted expectation is
\begin{equation}
    \bra{\Psi_3}
    \left(\rho_{C,3}^{\Gamma}\right)^{\otimes2}
    \ket{\Psi_3}
    =
    p_3\left(-\frac1{38}\right)
    =
    -\frac1{882}.
\end{equation}

\subsection{Hidden teleportation power of point \texorpdfstring{$C$}{C}}

The nonzero eigenvalues of $\rho_{C,d}$ are $1/(3d-2)$ on the
diagonal subspace and $4/[d(3d-2)]$ on the antisymmetric subspace.
Therefore
\begin{equation}
    \left\|\rho_{C,d}\right\|_\infty
    =
    \frac1{3d-2}\max\!\left\{1,\frac4d\right\}
    <
    \frac1d
    \qquad(d\geq3).
\end{equation}
It follows that
\begin{align}
    F_d(\rho_{C,d})
    &<
    \frac1d,
    \\
    F_{d^2}(\rho_{C,d}^{\otimes2})
    &<
    \frac1{d^2}.
\end{align}
One-copy undistillability and
Theorem~\ref{thm:filtering-equivalence} rule out every one-copy filter
that produces a teleportation-useful output. By
Theorem~\ref{thm:pointC}, two copies can be filtered into a useful
two-qubit state, and the output can be embedded into local dimension
$d^2$ with
\begin{equation}
    F_{d^2}(\widehat{\sigma})
    \geq
    \frac2{d^2}F_2(\sigma)
    >
    \frac1{d^2}.
\end{equation}
Thus point $C$ has no one-copy hidden teleportation power but has
two-copy hidden teleportation power for every $d\geq3$.

\section{The fixed point-\texorpdfstring{$C$}{C} witness away from
\texorpdfstring{$C$}{C}}
\label{app:nearby}

For each $d\geq3$, let $\ket{\Psi_d}$ denote the uniform
equal-norm-frame vector in Eq.~\eqref{eq:general-witness}. For $d=3$,
this is precisely the qutrit vector in
Eq.~\eqref{eq:qutrit-witness}.

For an arbitrary canonical state, its partial transpose can be written
as
\begin{equation}
    \rho_{b,c}^{\Gamma}
    =
    cI
    +
    \frac{b-c}{2}(I-\Omega)
    +
    \left[
        \frac1d
        -\frac{d-1}{2}b
        -\frac{d+1}{2}c
    \right]\Delta.
    \label{eq:canonical-PT-general}
\end{equation}
Using the expectation values for $\ket{\Psi_d}$ derived in
Eqs.~\eqref{eq:expval-DD}--\eqref{eq:expval-JJ},
together with
\begin{equation}
    \bra{\Psi_d}I\otimes(I-\Omega)\ket{\Psi_d}
    =
    \frac14
\end{equation}
and
\begin{equation}
    \bra{\Psi_d}I\otimes\Delta\ket{\Psi_d}
    =
    \frac{d+1}{4(d-1)},
\end{equation}
and the corresponding expressions with the two copies exchanged, we
obtain
\begin{equation}
    \boxed{
    \begin{aligned}
    w_{C,d}(b,c)
    &:=
    \bra{\Psi_d}
    \left(\rho_{b,c}^{\Gamma}\right)^{\otimes2}
    \ket{\Psi_d}
    \\[1mm]
    &=
    \frac{1}{8d^2(d-1)}
    \Big[
        d^2(d-1)(3d-4)b^2
        +2d^2(d-1)(d+4)bc
        \\
        &\hspace{18mm}
        -d^2(d-1)(d-4)c^2
        -10d(d-1)b
        -2d(d+3)c
        +8
    \Big].
    \end{aligned}
    }
    \label{eq:fixed-witness-plane-general}
\end{equation}
Every physical canonical state satisfying
\begin{equation}
    w_{C,d}(b,c)<0
\end{equation}
is therefore two-copy distillable.

At point $C$,
\begin{equation}
    w_{C,d}(C)
    =
    -\frac{d-2}
    {d^2(d-1)(3d-2)^2}<0.
\end{equation}
For $d=3$, Eq.~\eqref{eq:fixed-witness-plane-general} reduces to
\begin{equation}
    w_{C,3}(b,c)
    =
    \frac{
        45b^2+126bc-30b+9c^2-18c+4
    }{72},
\end{equation}
recovering the qutrit expression.

\subsection{The segment \texorpdfstring{$BC$}{BC}}

The relevant points are
\begin{equation}
    B=
    \left(
        \frac1{d(d-1)},0
    \right),
    \qquad
    C=
    \left(
        \frac4{d(3d-2)},0
    \right).
\end{equation}
Parameterize the segment from $C$ to $B$ by
\begin{equation}
    \rho_{CB}(t)
    =
    (1-t)\rho_{C,d}+t\rho_{B,d},
    \qquad
    0\leq t\leq1.
\end{equation}
Then
\begin{equation}
    b(t)
    =
    \frac4{d(3d-2)}
    -
    \frac{d-2}
    {d(d-1)(3d-2)}t,
    \qquad
    c(t)=0.
\end{equation}
Substitution into
Eq.~\eqref{eq:fixed-witness-plane-general} gives
\begin{equation}
    \boxed{
    \begin{aligned}
    w_{CB}^{(d)}(t)
    &=
    \frac{d-2}
    {8d^2(d-1)^2(3d-2)^2}
    \Big[
        (d-2)(3d-4)t^2
        % \\
        % &\hspace{28mm}
        +6(d-1)(d+2)t
        -8(d-1)
    \Big].
    \end{aligned}
    }
    \label{eq:general-BC-witness}
\end{equation}
Its first zero occurs at
\begin{equation}
    \boxed{
    t_B^{(d)}
    =
    \frac{
        8(d-1)
    }{
        3(d-1)(d+2)
        +(3d-2)\sqrt{(d-1)(d+7)}
    }.
    }
    \label{eq:general-tB}
\end{equation}
Hence the fixed point-$C$ witness proves two-copy distillability for
\begin{equation}
    0\leq t<t_B^{(d)}.
\end{equation}
For $d=3$,
\begin{equation}
    t_B^{(3)}
    =
    -6+\frac{14\sqrt5}{5}
    \approx0.260990.
\end{equation}

\subsection{The segment \texorpdfstring{$CG$}{CG}}

The point $G$ has coordinates
\begin{equation}
    G=
    \left(
        \frac3{d(2d-1)},
        \frac1{d(2d-1)}
    \right).
\end{equation}
Parameterize
\begin{equation}
    \rho_{CG}(t)
    =
    (1-t)\rho_{C,d}+t\rho_{G,d},
    \qquad
    0\leq t\leq1.
\end{equation}
Then
\begin{equation}
    b(t)
    =
    \frac4{d(3d-2)}
    +
    \frac{d-2}
    {d(2d-1)(3d-2)}t,
\end{equation}
and
\begin{equation}
    c(t)
    =
    \frac{t}{d(2d-1)}.
\end{equation}
The fixed-witness expectation factorizes as
\begin{equation}
    \boxed{
    \begin{aligned}
    w_{CG}^{(d)}(t)
    =
    \frac{
        \bigl[(d-1)t+2d-1\bigr]
        \bigl[
            (5d^2-10d+4)t
            -(d-2)(2d-1)
        \bigr]
    }{
        d^2(d-1)(2d-1)^2(3d-2)^2
    }.
    \end{aligned}
    }
    \label{eq:general-CG-witness}
\end{equation}
The first factor is strictly positive on $0\leq t\leq1$.
Consequently, the fixed witness proves two-copy distillability for
\begin{equation}
    0\leq t<t_G^{(d)},
\end{equation}
where
\begin{equation}
    \boxed{
    t_G^{(d)}
    =
    \frac{
        (d-2)(2d-1)
    }{
        5d^2-10d+4
    }.
    }
    \label{eq:general-tG}
\end{equation}
For $d=3$,
\begin{equation}
    t_G^{(3)}
    =
    \frac5{19}
    \approx0.263158.
\end{equation}

The endpoint $G$ is two-copy undistillable by the Werner-state
result. The present argument does not decide the status of the
remaining part of the segment between $t_G^{(d)}$ and $G$.

\subsection{Geometry of the fixed-witness region}

Define
\begin{equation}
    \mathcal R_d^{(2)}
    =
    \left\{
        (b,c)\ \text{physical}:
        w_{C,d}(b,c)<0
    \right\}.
    \label{eq:fixed-witness-region}
\end{equation}
The quadratic part of the numerator in
Eq.~\eqref{eq:fixed-witness-plane-general} has coefficient matrix
\begin{equation}
    Q_d^{(2)}
    =
    d^2(d-1)
    \begin{pmatrix}
        3d-4 & d+4\\
        d+4  & 4-d
    \end{pmatrix},
\end{equation}
whose determinant is
\begin{equation}
    \det Q_d^{(2)}
    =
    -4d^4(d-1)^2(d^2-2d+8)<0.
\end{equation}
The augmented conic determinant is
\begin{equation}
    -8d^4(d-1)(d^2+6d-9)\neq0,
\end{equation}
so the boundary
\begin{equation}
    w_{C,d}(b,c)=0
\end{equation}
is a nondegenerate hyperbola.

The connected component $\mathcal R_{d,C}^{(2)}$ containing point $C$
is an open hyperbolic lobe. Let
\begin{equation}
    B_d'=\rho_{CB}\left(t_B^{(d)}\right),
    \qquad
    G_d'=\rho_{CG}\left(t_G^{(d)}\right).
\end{equation}
These are the first intersections of its boundary with the segments
$BC$ and $CG$, respectively. Hence the segments from $C$ to $B_d'$
and $G_d'$, excluding their boundary endpoints, are contained in
$\mathcal R_{d,C}^{(2)}$.

Finally, the certified fractions along the two displayed directions
behave asymptotically as
\begin{equation}
    t_B^{(d)}
    =
    \frac{4}{3d}+O(d^{-2}),
    \qquad
    t_G^{(d)}
    =
    \frac25+O(d^{-1}).
\end{equation}
Hence the fixed certificate becomes increasingly elongated in the
direction from $C$ toward $G$ as the local dimension grows.

\section{Point-\texorpdfstring{$C$}{C}-tailored fixed three-copy witnesses and regions}
\label{app:three-copy-fixed}

For the three-copy calculations it is convenient to rewrite
Eq.~\eqref{eq:canonical-PT-general} as
\begin{equation}
    \rho_{b,c}^{\Gamma}
    =
    pI-r\Omega+z\Delta,
    \label{eq:three-copy-prz}
\end{equation}
where
\begin{equation}
    p=\frac{b+c}{2},
    \qquad
    r=\frac{b-c}{2},
    \qquad
    z=
    \frac1d-
    \frac{d-1}{2}b-
    \frac{d+1}{2}c.
    \label{eq:three-copy-prz-definitions}
\end{equation}

\subsection{A sparse qutrit witness}

On a local three-qutrit space, define
\begin{align}
    \ket{x}
    &=
    \frac{\ket{011}-\ket{102}-\ket{220}}{\sqrt3},
    &
    \ket{y}
    &=
    \frac{\ket{000}+\ket{121}-\ket{212}}{\sqrt3}.
\end{align}
The vectors are orthonormal. Hence
\begin{equation}
    \ket{\Xi_3}
    =
    \frac1{\sqrt2}
    \left(
        \ket{y}_A\ket{y}_B
        -
        \ket{x}_A\ket{x}_B
    \right)
    \label{eq:qutrit-three-copy-witness}
\end{equation}
is normalized and has Schmidt rank exactly two across
$(A_1A_2A_3):(B_1B_2B_3)$. At point $C$,
\begin{equation}
    \bra{\Xi_3}
    \left(\rho_{C,3}^{\Gamma}\right)^{\otimes3}
    \ket{\Xi_3}
    =-\frac1{3087}<0.
    \label{eq:qutrit-three-copy-negative}
\end{equation}

\subsection{A three-line frame witness for
\texorpdfstring{$d\geq4$}{d>=4}}

Let $d\geq4$. For $m\geq3$, introduce the real two-column frame
\begin{equation}
    f_{\ell 1}^{(m)}
    =
    \frac1{\sqrt m}
    \cos\!\left(\frac{2\pi\ell}{m}\right),
    \qquad
    f_{\ell 2}^{(m)}
    =
    \frac1{\sqrt m}
    \sin\!\left(\frac{2\pi\ell}{m}\right),
    \label{eq:three-copy-frame}
\end{equation}
where $\ell=0,\ldots,m-1$. For $m=2$, set
$f_{01}^{(2)}=f_{12}^{(2)}=1/\sqrt2$ and the other two entries to
zero. In either case,
\begin{equation}
    \sum_{\ell=0}^{m-1}
    f_{\ell r}^{(m)}f_{\ell s}^{(m)}
    =
    \frac12\delta_{rs}.
    \label{eq:three-copy-frame-tight}
\end{equation}

Let $I_0$, $I_1$, and $I_2$ be the increasing ordered lists obtained
from $\{0,\ldots,d-1\}$ by deleting, respectively,
$\{1,3\}$, $\{2\}$, and $\{3\}$. Denote their entries by
$i_{a,\ell}$ and define three disjoint coordinate lines
\begin{align}
    \ket{w_{0,\ell}}
    &=
    \ket{i_{0,\ell},2,3},
    &&
    \ell=0,\ldots,d-3,
    \\
    \ket{w_{1,\ell}}
    &=
    \ket{1,i_{1,\ell},3},
    &&
    \ell=0,\ldots,d-2,
    \\
    \ket{w_{2,\ell}}
    &=
    \ket{3,2,i_{2,\ell}},
    &&
    \ell=0,\ldots,d-2.
\end{align}
For $s=1,2$, put
\begin{align}
    \ket{L_{0,s}}
    &=
    \sum_{\ell=0}^{d-3}
    f_{\ell s}^{(d-2)}\ket{w_{0,\ell}},
    \\
    \ket{L_{a,s}}
    &=
    \sum_{\ell=0}^{d-2}
    f_{\ell s}^{(d-1)}\ket{w_{a,\ell}},
    \qquad a=1,2.
\end{align}
The disjoint supports and Eq.~\eqref{eq:three-copy-frame-tight} imply
\begin{equation}
    \braket{L_{a,r}|L_{b,s}}
    =
    \frac12\delta_{ab}\delta_{rs}.
    \label{eq:three-line-orthogonality}
\end{equation}

For $0<\lambda<2$, let $\mu=1-\lambda/2$ and define
\begin{align}
    \ket{u_s(\lambda)}
    &=
    \sqrt\lambda\ket{L_{0,s}}
    +\sqrt\mu\ket{L_{1,s}}
    +\sqrt\mu\ket{L_{2,s}},
    \\
    \ket{v_s(\lambda)}
    &=
    -\sqrt\lambda\ket{L_{0,s}}
    +\sqrt\mu\ket{L_{1,s}}
    +\sqrt\mu\ket{L_{2,s}}.
\end{align}
Equation~\eqref{eq:three-line-orthogonality} gives
$\braket{u_r|u_s}=\braket{v_r|v_s}=\delta_{rs}$. Therefore
\begin{equation}
    \ket{\Theta_{d,3}(\lambda)}
    =
    \frac1{\sqrt2}
    \sum_{s=1}^{2}
    \ket{u_s(\lambda)}_A\ket{v_s(\lambda)}_B
    \label{eq:three-line-witness}
\end{equation}
is normalized and has Schmidt rank exactly two.

Direct evaluation at point $C$ gives
\begin{equation}
    \begin{aligned}
    &\bra{\Theta_{d,3}(\lambda)}
    \left(\rho_{C,d}^{\Gamma}\right)^{\otimes3}
    \ket{\Theta_{d,3}(\lambda)}
    \\
    &\quad=
    \frac{1}{d^3(d-1)(3d-2)^3}
    \left[
        \frac{P_d}{4(d-2)}\lambda^2
        -N_d\lambda
        +(d-2)^2
    \right],
    \label{eq:three-line-pointC-lambda}
    \end{aligned}
\end{equation}
where
\begin{equation}
    P_d=13d^3-34d^2+28d-8,
    \qquad
    N_d=5d^2-8d+4.
\end{equation}
The minimizing weight is
\begin{equation}
    \lambda_d
    =
    \frac{2(d-2)(5d^2-8d+4)}
    {P_d}.
    \label{eq:lambda-d-three-copy}
\end{equation}
Writing
$\ket{\Theta_{d,3}}=\ket{\Theta_{d,3}(\lambda_d)}$, we obtain
\begin{equation}
    \bra{\Theta_{d,3}}
    \left(\rho_{C,d}^{\Gamma}\right)^{\otimes3}
    \ket{\Theta_{d,3}}
    =
    -\frac{4(d-2)}
    {d(3d-2)^2 P_d}<0.
    \label{eq:three-line-pointC-negative}
\end{equation}
For every $d\geq4$, one has $0<\lambda_d<2$, so this choice is
admissible.

\subsection{The qutrit cubic}

For the fixed vector $\ket{\Xi_3}$ in
Eq.~\eqref{eq:qutrit-three-copy-witness}, direct evaluation gives
\begin{equation}
    \boxed{
    \begin{aligned}
    w_{\Xi,3}^{(3)}(b,c)
    &:={}
    \bra{\Xi_3}
    \left(\rho_{b,c}^{\Gamma}\right)^{\otimes3}
    \ket{\Xi_3}
    \\
    &={}
    -\frac1{324}\Big(
        27b^3+243b^2c-81b^2
        +405bc^2
        % \\
        % &\hspace{20mm}
        -270bc+36b
        -27c^3-81c^2+36c-4
    \Big).
    \end{aligned}
    }
    \label{eq:qutrit-three-copy-plane}
\end{equation}
Every physical qutrit canonical state with
$w_{\Xi,3}^{(3)}(b,c)<0$ is three-copy distillable. Along $CB$, the
expectation factorizes as
\begin{equation}
    w_{\Xi,CB}^{(3)}(t)
    =
    \frac{(t+6)(t^2+96t-48)}{889056},
\end{equation}
and is negative for
\begin{equation}
    0\leq t<28\sqrt3-48.
\end{equation}
Along $CG$,
\begin{equation}
    w_{\Xi,CG}^{(3)}(t)
    =
    -\frac{(t-15)(2t+5)(19t-5)}{1157625},
\end{equation}
so the first zero is $t=5/19$, the same cutoff as the fixed two-copy
qutrit witness.

\subsection{The cubic for the three-line witness}

For $d\geq4$, let
\begin{equation}
    w_{d,\lambda}^{(3)}(b,c)
    :=
    \bra{\Theta_{d,3}(\lambda)}
    \left(\rho_{b,c}^{\Gamma}\right)^{\otimes3}
    \ket{\Theta_{d,3}(\lambda)}.
\end{equation}
Expanding Eq.~\eqref{eq:three-copy-prz} over the three copies and using
the disjoint line supports together with
Eq.~\eqref{eq:three-copy-frame-tight} gives
\begin{equation}
    \boxed{
    \begin{aligned}
    w_{d,\lambda}^{(3)}(b,c)
    ={}&
    p^3+A_{d,\lambda}p^2z-B_{d,\lambda}p^2r
    +C_{d,\lambda}pz^2
    -D_{d,\lambda}pzr
    \\
    &+E_{d,\lambda}pr^2
    +F_{d,\lambda}z^3
        -G_{d,\lambda}z^2r
    +H_{d,\lambda}zr^2
    -J_{d,\lambda}r^3,
    \end{aligned}
    }
    \label{eq:three-line-plane-compact}
\end{equation}
where
\begin{align}
    A_{d,\lambda}
    &={}
    \frac{
        (d^2-2)\lambda^2
        -4(d-2)\lambda
        +4d(d-2)
    }{4(d-2)(d-1)},
    \\
    B_{d,\lambda}
    &={}
    \lambda^2-\lambda+2,
    \\
    C_{d,\lambda}
    &={}
    \frac{
        (3d^2+3d-10)\lambda^2
        -4(d^2+d-6)\lambda
        +4(d^2+d-6)
    }{8(d-2)(d-1)},
    \\
    D_{d,\lambda}
    &={}
    \frac{
        (9d^2-21d+10)\lambda^2
        -4(3d^2-7d+2)\lambda
        +4(3d^2-7d+2)
    }{4(d-2)(d-1)},
    \\
    E_{d,\lambda}
    &={}
    \frac{23\lambda^2-36\lambda+20}{8},
    \\
    F_{d,\lambda}
    &={}
    \frac{
        (3d-4)\lambda^2
        -4(d-2)\lambda
        +4(d-2)
    }{4(d-2)(d-1)},
    \\
    G_{d,\lambda}
    &={}
    \frac{
        (3d^2-3d-2)\lambda^2
        -4(d^2-d-2)\lambda
        +4(d^2-d-2)
    }{4(d-2)(d-1)},
    \\
    H_{d,\lambda}
    &={}
    \frac{
        (10d^2-27d+16)\lambda^2
        +(-16d^2+44d-24)\lambda
        +8d^2-20d+8
    }{4(d-2)(d-1)},
    \\
    J_{d,\lambda}
    &={}
    2(\lambda-1)^2.
\end{align}
Substituting the point-$C$-optimized value in
Eq.~\eqref{eq:lambda-d-three-copy} defines the fixed cubic
\begin{equation}
    w_{\Theta,d}^{(3)}(b,c)
    =
    w_{d,\lambda_d}^{(3)}(b,c).
    \label{eq:fixed-three-copy-plane-general}
\end{equation}
The full region certified by the displayed constructions at three
copies is
\begin{equation}
    \mathcal R_{d,\mathrm{cert}}^{(3)}
    =
    \left\{w_{C,d}<0\right\}
    \cup
    \begin{cases}
        \left\{w_{\Xi,3}^{(3)}<0\right\},&d=3,\\
        \left\{w_{\Theta,d}^{(3)}<0\right\},&d\geq4,
    \end{cases}
    \label{eq:fixed-three-copy-certified-union}
\end{equation}
where all sets are understood as subsets of the physical canonical plane.
The first set is included because every two-copy certificate also
yields a three-copy certificate by tensoring the witness with a
suitable product vector on the third copy.

For orientation, the first boundary intersections along $CB$ and $CG$
are
\begin{equation}
\begin{array}{c|cc|cc}
 d
 &t_{B}^{(2)}&t_{B}^{(3),\mathrm{cert}}
 &t_{G}^{(2)}&t_{G}^{(3),\mathrm{cert}}
 \\
 \hline
 3&0.260990&0.497423&0.263158&0.263158\\
 4&0.215352&0.296299&0.318182&0.318182\\
 5&0.183838&0.251100&0.341772&0.341772\\
 6&0.160645&0.218640&0.354839&0.354839
\end{array}
\label{eq:fixed-three-copy-cutoff-table}
\end{equation}
For $d\geq4$, the fixed three-line witness is slightly weaker than the
two-copy witness along $CG$; the certified union therefore retains the
larger two-copy cutoff in that direction. These values concern only the
analytic fixed witnesses and are not claims about the fully optimized
three-copy-distillable region.

\end{document}